\documentclass[letterpaper,11pt,prb,reprint,aps,superscriptaddress,showpacs,nofootinbib]{revtex4-2}

\usepackage[utf8]{inputenc}
\usepackage[colorlinks=true,linkcolor=blue,citecolor=blue,urlcolor=blue]{hyperref}
\usepackage{graphicx}
\usepackage{amsmath,amssymb}
\usepackage{dsfont}
\usepackage{xcolor}
\usepackage{amsthm}
\usepackage{float}

\usepackage{algorithm}
\usepackage{algpseudocode}

\algnewcommand\algorithmicforeach{\textbf{for each}}
\algdef{S}[FOR]{ForEach}[1]{\algorithmicforeach\ #1\ \algorithmicdo}

\usepackage{braket}

\newtheorem{lemma}{Lemma}
\newtheorem{definition}{Definition}
\newtheorem{theorem}{Theorem}

\theoremstyle{remark}

\def\red{red!70}

\def\green{green!70!black!80}

\begin{document}

\title{Fast magic state preparation by gauging higher-form transversal gates in parallel}
\author{Dominic J. Williamson}
\affiliation{School of Physics, University of Sydney, Sydney, NSW 2006, Australia}
\date{January 2026}

\begin{abstract}
\noindent
Magic states are a foundational resource for universal quantum computation. 
To survive in a realistic noisy environment, magic states must be prepared fault-tolerantly and protected by a quantum error-correcting code. 
The recent discovery of highly efficient quantum low-density parity-check codes, together with efficient logic gates, lays the groundwork for low-overhead fault-tolerant quantum computation. 
This motivates the search for fast and parallel protocols for logical magic state preparation to enable universal quantum computation. 
Here, we introduce a fast code surgery procedure that performs a fault-tolerant measurement of many transversal logic gates in parallel. 
This is achieved by performing a generalized gauging measurement on a quantum code that supports a higher-form transversal gate. 
The time overhead of our procedure is constant, and the qubit overhead is linear. 
The procedure inherits fault-tolerance properties from the base code and the structure of the higher-form transversal gate. 
When applied to codes that support higher-form Clifford gates our procedure achieves fast and fault-tolerant preparation of many magic states in parallel. 
This motivates the search for good quantum low-density parity-check codes that support higher-form Clifford gates. 
\end{abstract}

\maketitle

\section{Introduction}


Utility-scale fault-tolerant quantum computers promise to deliver exponentially faster solutions to important problems in cryptography, chemistry, and physics, while functioning in a realistic noisy environment. 
The possibility of universal fault-tolerant quantum computing is well established, however, the timeline to its realization depends heavily on the cost of implementing quantum logic fault tolerantly~\cite{yoder2025tour}. 
In recent years, highly efficient \textit{good} quantum codes have been discovered that require only low-density parity checks~\cite{panteleev2022asymptotically}, see also Refs.~\cite{breuckmann2021balanced,leverrier2022quantum,dinur2022locally}. 
Fault-tolerant quantum logic operations have been developed for such codes which preserve their efficiency and low-density parity check structure while enabling universal quantum computation~\cite{gottesman2013fault,nguyen2024quantum,tamiya2024polylog}. 

In the standard circuit model, universal quantum computation is achieved by complementing nonuniversal Clifford operations with non-Clifford gates such as $CCZ$ or $T$. 
These non-Clifford gates can be performed via gate teleportation using Clifford operations and specially prepared nonstabilizer \textit{magic states}~\cite{NielsenChuang}. 
This division of operations has the advantage that it respects the structure of stabilizer codes, the most well-understood class of quantum codes~\cite{gottesman1997stabilizer}. 

Pauli Based Computation (PBC) is an alternative model for universal quantum computation that is well-suited to good quantum low-density parity-check (qLDPC) codes.
In this model, universal quantum computation is performed via adaptive Pauli measurements on Pauli and magic states~\cite{bravyi2016trading}. 
A quantum circuit is mapped to PBC by compiling out all Clifford operations. 
This is a natural fit for good qLDPC codes where a priori there is no preferred cheap logical basis. 
There has been a surge of recent progress on performing efficient fault-tolerant PBC on good qLDPC codes with the ultimate goal of implementing arbitrary sets of commuting logical Pauli measurements in parallel with the lowest possible space and time overhead~\cite{cohen2022low,cross2024improved,xu2024fast,williamson2024low,ide2024fault,swaroop2024universal,cowtan2025parallel,he2025extractors,Xu2025Batched,zheng2025high,baspin2025fast,Cowtan2025Fast}. 
The other equally important aspect of PBC is the preparation of magic states. 
The central goal on this front is to fault-tolerantly prepare as many encoded magic states as possible, in parallel, with minimal space and time overhead. 

Prior work on fault-tolerant magic state preparation has focused on codes that support transversal non-Clifford gates for direct application~\cite{bombin2015gauge,bombin2016dimensional,Vasmer2019three,Bombin2007exact,Bombin20182D,Brown2020FTNC,breuckmann2024cups,golowich2024quantum,scruby2024quantum,Hsin2024,Zhu2025b,zhu2023gates,zhu2025topological,Gulshen2025,Golowich2025,He2025,golowich2025constant,Tan2025Single,Jacob2025,Menon2025,Kobayashi2025}
or magic state distillation~\cite{bravyi2005magic,meier2012magic,Bravyi2012magic,campbell2012magic,Jones2013multilevel,haah2017magic,krishna2018magic,Krishna2019towards,litinski2019magic,rodriguez2024experimental,wills2024constant,Nguyen2024}. 
Another line of work has focused on the possibility of fault-tolerantly measuring individual transversal Clifford gates for topological codes of fixed size~\cite{li2015magic,Chamberland2019faulttolerantmagic,Chamberland2020verylow,gidney2024magic,gupta2024encoding,Hirano2024zerolevel,Itogawa2025,Chen2025,Vaknin2025,Sahay2025,Claes2025}, or in a scalable family~\cite{davydova2025universal,Huang2025,Bauer2025}. 

In this work we introduce an efficient, fast and parallel method that is capable of fault-tolerantly preparing magic states at a constant rate via the measurement of higher-form logical Clifford gates in suitable qLDPC codes. 
To achieve this, we define higher-form transversal gates for qLDPC codes, generalizing the concept of higher-form symmetry from the topological code setting~\cite{kapustin2013higher,Gaiotto2015}. 
We develop a higher-form gauging measurement procedure that can be applied to codes with higher-form transversal gates. 
This extends recently developed fast surgery procedures beyond Pauli operators~\cite{baspin2025fast,Cowtan2025Fast}. 

We apply the higher-form gauging procedure to prepare many magic states in parallel by measuring many commuting transversal Clifford operators simultaneously. 
For Calderbank-Shor-Steane (CSS)~\cite{steane1996multiple,calderbank1996good} codes, a sufficient, but not necessary, condition for a higher-form Clifford gate is the existence of a transversal non-Clifford $CCZ$ or $T$ gate. 
Our approach presents an alternative path to efficient fault-tolerant magic state preparation with qLDPC codes that does not rely on transversal non-Clifford gates or magic state distillation, but instead requires Clifford gates with a higher-form structure. 
This raises the challenge of discovering efficient qLDPC codes with single-shot Pauli state preparation and 1-form Clifford gates for the purpose of magic state preparation.

\section{Background}

In this section we introduce the setting for our work. 
We focus on qLDPC codes defined on qubits, the extension to qudits is straightforward~\cite{breuckmann2021ldpc}. 
A family of qLDPC codes is defined to be the simultaneous $+1$ eigenspace of a collection of constant-weight Hermitian check operators, such that each qubit is only acted on by a constant number of checks. 
The \textit{support} of an operator is the set of qubits it acts on nontrivially, the \textit{weight} of the operator is the size of this set. 
A Pauli stabilizer code is one in which all checks are products of Pauli operators, which must commute with each other to define a nontrivial codespace. 
A CSS code is a stabilizer code where a generating set of checks are products of either Pauli-$X$ operators or Pauli-$Z$ operators~\cite{steane1996multiple,calderbank1996good}. 
The parameters of a quantum code are denoted $[[n,k,d]]$ where $n$ is the number of physical qubits, $k$ is the number of encoded logical qubits, and $d$ is the \textit{distance}, minimum weight of a nontrivial logical operator. 
A \textit{logical operator} is an operator that preserves the codespace. 

A \textit{chain complex} $C_\bullet$ over $\mathbb{F}_2$ is a sequence of linear \textit{boundary} maps between finite dimensional vector spaces over $\mathbb{F}_2$ which are viewed as Abelian groups under addition
\begin{align}
    \dots \mathop{\leftrightarrows}^{\partial_{0}}_{\delta_{0}} 
    C_0 \mathop{\leftrightarrows}^{\partial_{1}}_{\delta_{1}} 
    C_1 \mathop{\leftrightarrows}^{\partial_2}_{\delta_2} 
    C_2 \mathop{\leftrightarrows}^{\partial_3}_{\delta_3} \dots
\end{align}
satisfying the condition $\partial_i \partial_{i+1}=0$. 
Here, we identify each finite dimensional vector space $C_i$ with its dual space $\widehat{C}_i$ and hence the \textit{coboundary} maps are given by $\delta_i = \partial_i^T$, which satisfy ${\delta_{i+1} \delta_{i}=0}$. 
The $i$-th \textit{homology group} is defined to be $H_i(C_\bullet)=\ker \partial_{i} / \mathop{\text{Im}} \partial_{i+1}$, while the $i$-th \textit{cohomology group} is defined to be ${H^i(C_\bullet)= \ker \delta_{i+1} / \mathop{\text{Im}} \delta_{i}}$. 
The \textit{distance} of a homology group $H_i(C_\bullet)$ is the minimum weight over all elements in $\ker \partial_i \backslash \mathop{\text{Im}} \partial_{i+1}$, and similar for cohomology groups. 
A pair of spaces $C_{i-1},C_{i},$ connected by boundary and coboundary maps $\partial_{i},\delta_{i}$, can be viewed as a \textit{hypergraph} with vertices and hyperedges given by basis vectors of $C_{i-1}$ and $C_i$, respectively.

A CSS code can be described by a length-three chain complex, as written above, where $C_0$ correspond to the $X$-checks, $C_1$ correspond to qubits, and $C_2$ correspond to $Z$-checks. 
The group of logical $Z$ ($X$) operators is isomorphic to the first homology (cohomology) group of this chain complex. 
The following representation of $C_1$ sends elements of $\ker \delta_2$ to $X$-type logical operators 
\begin{align}
    c \mapsto X(c) := \prod_{q} X_q^{c_q}.
\end{align}
Above, the product is over all qubits $q$, which label the basis states of $C_1$, and each coefficient $c_q\in \mathbb{F}_2$ takes a binary value, 0~or~1. 
This is similar for $\ker \partial_1$ and $Z$-type logical operators. 
The weight of an operator under this representation is given by the Hamming weight of the corresponding binary vector in $C_1$. 
The qLDPC condition is satisfied iff $\delta_1,\partial_2,$ are sparse. 
The $Z$ ($X$) distance of the CSS code equals the distance of the first homology (cohomology) group.

\section{Higher-form transversal gates}

\begin{figure}[t]
    \centering
    \includegraphics[page=8]{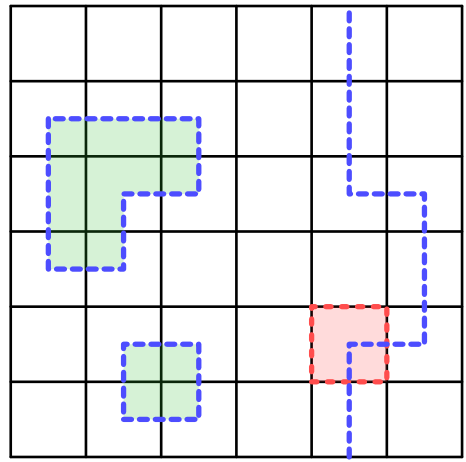}
    \caption{A 1-form transversal gate with chain complex given by the cellulation of a torus shown above. 
    Cocycles in $\ker \delta_2$ (depicted as blue dashed lines) determine the support of elements in the 1-form transversal gate. 
    Vertices (shaded green) are generators of the logically trivial subgroup of the 1-form transversal gate $\mathop{\text{Im}} \delta_1$.
    Plaquettes (shaded red) generate the parity constraints on edges (red dashed lines), corresponding to $\partial_2$, that specify $\ker \delta_2$. }
    \label{fig:0}
\end{figure}

In this section we define higher-form transversal gates for qLDPC codes. 
A conventional $h$-form symmetry is associated with a group generated by operators that act on codimension-$h$ submanifolds of a finite-dimensional topological space~\cite{kapustin2013higher,Gaiotto2015}. 
In this terminology, a 0-form symmetry is a standard global symmetry. 
For a chain complex derived from a cellulation of a manifold, the group of $h$-cocycles $\ker \delta_{h+1}$ provides an example of an $h$-form symmetry. 
Equivalently, by Poincar\'e duality, for a $D$-dimensional manifold the group of $(D-h)$ cycles $\ker \partial_{D-h}$ provides another example of an $h$-form symmetry. Fig.~\ref{fig:0} depicts elements of the 1-form symmetry associated to a square-lattice cellulation of the torus. 

\subsection{$h$-form transversal gates}

Inspired by the conventional definition, we now propose a definition of higher-form transversal gates for qLDPC codes. 
Our definition reduces to the conventional notion of higher-form symmetry for topological codes. 
Proposals for generalizations of higher-form symmetry have previously appeared in the context of subsystem symmetries~\cite{Rayhaun2021}. 

\begin{definition}[$h$-form transversal gate]
    Given a chain complex 
    \begin{align}
    \label{eq:HFCC}
     C_0 \mathop{\leftrightarrows}^{\partial_{1}}_{\delta_{1}} 
    \dots
    C_{h-1} \mathop{\leftrightarrows}^{\partial_h}_{\delta_h} 
    C_h \mathop{\leftrightarrows}^{\partial_{h+1}}_{\delta_{h+1}} 
    C_{h+1} 
    \dots
    \end{align}
    an $h$-form transversal gate is defined to be a representation of the group $\ker \delta_{h+1}$ of the form 
    ${U(c):= \prod_{s} U_s(c_s)}$, for $c\in \ker \delta_{h+1}$. 
    Here,  $s$ labels the locations of sites which may contain multiple qubits, and $U_s$ is an on-site representation of $\mathbb{F}_2$. 
\end{definition}

Our main focus is on representation of $\mathbb{Z}_2$ and so we abuse notation and simply write $U_s$ in place of $U_s(1)$, while  $U_s(0)=\mathds{1}$.
A simple example of a 1-form transversal gate, with on-site representation $U_s=X_s$, is given by the $X$ logical operators of a CSS code.

In this work, we consider a loose notion of \textit{transversal gate} and \textit{on-site representation}. 
Technically, we require the higher-form symmetry to be \textit{anomaly free}, in the sense that it can be gauged.
Unpacking this, the operators $U_s$ are required to commute but they may be different for each site, they may act on multiple qubits near the site $s$ and their support may overlap. 
The logical action may also be distinct from the physical on-site actions. 
We require the support of the $U_s$ operators to be sparse, i.e.~each operator acts on a constant number of qubits, and each qubit is acted upon by a constant number of $U_s$ operators. 
We define a \textit{strongly} transversal gate to be one where all on-site actions are the same unitary $U$, and furthermore the action on each logical qubit is $U$ in some logical basis. 
 
For example, the group of $X$ type logicals, or $Z$ type logicals, are strongly transversal and nonanomalous i.e.~they can be gauged on any stabilizer code~\cite{Williamson2016,kubica2018ungauging,Rakovszky2023,Cowtan2025Fast}. 
However, the group of all $X$ and $Z$ type logical cannot be simultaneously gauged when there are nontrivial logical operators. 
This is due to a mixed anomaly between the $X$ and $Z$ type higher-form symmetries which is a consequence of the nontrivial commutation relations between the logical operators. 
In this case the anomaly is also reflected by the fact that the on-site representation generated by both $X$ and $Z$ forms a projective representation of $\mathbb{Z}_2$, whereas we require $U_s$ to form a standard linear representation for the higher-form transversal gate to be anomaly free.

We require any family of qLDPC codes we consider to have a growing distance, i.e.~there are no constant weight logical operators.
Similarly, we require the higher-form transversal gates we consider to be qLDPC in the sense that they admit sparse $\delta_\bullet,\partial_\bullet,$ matrices. 
We also require the homology (cohomology) groups of the higher-form transversal gate to have growing distances. 
This means that there are no constant weight transversal gates that are homologically (cohomologically) trivial. 
Equivalently, these conditions ensure that the higher-form transversal gates are cleanable to the complement of any constant-sized region. Here, regions are defined using the locality inherited from the \textit{Tanner graph} of the qLDPC code. 
This implies that, for any element in $\ker \delta_{h+1}$ from an $h$-form transversal gate and any constant-sized region in a qLDPC code, there is a local element, in $\mathop{\text{Im}}\delta_{h}$ of the $h$-form transversal gate, which has the same action when restricted to the constant-sized region. 
Hence, if a local operator commutes with the locally generated subgroup $\mathop{\text{Im}}\delta_{h}$ of an $h$-form transversal gate, it must commute with the full $h$-form transversal gate group $\ker \delta_{h+1}$. 

\subsection{1-form Clifford gates}


An important class of higher-form transversal gates are 1-form Clifford gates. 
In this work, we show that such gates can be measured in constant time to prepare logical magic states via parallel logical Clifford measurement. 
An advantage of this approach over transversal 0-form gates from the third level of the Clifford Hierarchy is that the application of 1-form Clifford gates can be treated wholly within the stabilizer formalism~\cite{gottesman1997stabilizer}. 

For the 1-form Clifford gate examples presented in this section we assume that the distance of the first cohomology of each chain complex is growing with the code size. 
This ensures that elements of $\ker \delta_2$ are locally cleanable. which implies that for any element in $\ker \delta_2$, and any constant sized region $R$, there is a local element in $\mathop{\text{Im}}$ that takes the same form on the region $R$. 
In this case, the conditions that a transversal gate preserves the code space can be stated in terms of cohomologically trivial elements. 

\subsubsection{1-form $CZ$ gates}

We first consider 1-form $CZ$ gates on CSS codes. 
Measuring a strongly transversal gate on a CSS code initialized in the $X$-basis produces $CZ$ magic states. 
Any CSS code supporting a 0-form transversal $CCZ$ gate also supports a 1-form $CZ$ gate. 
We define the \textit{group commutator} $c(A,B):=ABA^\dagger B^\dagger$ for unitary matrices $A,B.$ 
The 1-form $CZ$ gate is isomorphic to the group of $X$-type logical operators, obtained by taking the group commutator of each logical operator with the transversal $CCZ$ gate. 

More generally, a pair of CSS codes can support a 1-form transversal $CZ$ gate without a 0-form transversal $CCZ$ gate. 
We consider three chain complexes, the first $C^{(0)}_\bullet$ defines the 1-form transversal gate with an on-site $CZ$ action, the second and third $C^{(1)}_\bullet,C^{(2)}_\bullet,$ specify a pair of CSS codes on which the gate acts
\begin{align}
    C_0^{(i)} \mathop{\leftrightarrows}^{\partial_{1}^{(i)}}_{\delta_{1}^{(i)}} 
    C_1^{(i)} \mathop{\leftrightarrows}^{\partial_2^{(i)}}_{\delta_2^{(i)}} 
    C_2^{(i)} 
    \label{eq:1FCSS}
\end{align}
for $i=0,1,2.$
We require an isomorphism between the vertices of the symmetry $C_1^{(0)}$ and a subset of the qubits in $C^{(1)}_\bullet,C^{(2)}_\bullet.$
For the transversal gate to preserve the codespace of the CSS codes it must satisfy
\begin{align}
    \mathop{\text{Im}} \delta_1^{(0)} \circ \mathop{\text{Im}} \delta_1^{(1)} \in \ker \partial_1^{(2)} ,
\end{align}
which can equivalently be formulated as
\begin{align}
    \mathop{\text{Im}} \delta_1^{(0)} \circ \mathop{\text{Im}} \delta_1^{(1)} \cdot \mathop{\text{Im}} \delta_1^{(2)} = 0 ,
\end{align}
where $\circ$ denotes the element-wise product, and $\cdot$ denotes the dot product, of $\mathbb{F}_2$-vectors. 
In the above equation, we consider the restriction of $\mathop{\text{Im}} \delta_1^{(1)},\mathop{\text{Im}} \delta_1^{(2)},$ to the qubits identified with $C_1^{(0)}$. 
The above expressions are symmetric, and hence any pair of CSS codes supports a 1-form $CZ$ gate. 
This follows by noting that $\mathop{\text{Im}} \delta_1^{(1)} \circ \mathop{\text{Im}} \delta_1^{(2)}$ specifies $\ker \delta_2^{(0)}$. 
We remark that this 1-form $CZ$ gate may be trivial. 

For a 0-form transversal $CCZ$ gate on three copies of a CSS code $C^{(1)}_\bullet$, the same complex describes the codes and the 1-form transversal gate. Hence, in the above notation we have $C^{(0)}_\bullet=C^{(1)}_\bullet=C^{(2)}_\bullet$ for the 1-form transversal $CZ$ gate obtained by taking group commutators between the transversal $CCZ$ gate and the $X$ logical operators on the third copy of the code. 

\subsubsection{1-form $XS$ gates}

We now consider 1-form $\sqrt{-i}XS~(\sqrt{i}XS^\dagger)$ gates on CSS codes. 
Below, we write $XS~(XS^\dagger)$ and leave the $\sqrt{\pm i}$ phases impicit. 
Measuring a strongly transversal $XS~(XS^\dagger)$ gate on a CSS code initialized in a Pauli basis will produce $T~(T^\dagger)$ magic states. 
Any CSS code supporting a 0-form transversal $T~(T^\dagger)$ gate also supports a 1-form $XS~(XS^\dagger)$ gate. 
This is obtained by conjugating the group of $X$-type logical operators by the transversal $T~(T^\dagger)$ gate. 

A CSS code can support a 1-form transversal $XS$ $(XS^\dagger)$ gate without a 0-form $T$ $(T^\dagger)$ gate. 
We consider two chain complexes, the first $C^{(0)}_\bullet$ defines the 1-form transversal gate, the second $C^{(1)}_\bullet,$ defines a CSS code on which the gate acts. 
For the 1-form gate to preserve the codespace requires
\begin{align}
    \mathop{\text{Im}} \delta_1^{(0)} \circ \mathop{\text{Im}} \delta_1^{(1)} \cdot \mathop{\text{Im}} \delta_1^{(1)} = 0 .
\end{align}
Additionally, for each region of intersecting support between an $X$-check and 1-form transversal gate, specified by ${\mathop{\text{Im}} \delta_1^{(0)} \circ \mathop{\text{Im}} \delta_1^{(1)}} $, we require that the parity of the number of on-site actions $XS$ equals the parity of half the total number of sites in that support. 
This is satisfied when all overlap regions are multiples of four and every on-site representation is $XS^\dagger$. 
This is also satisfied when there are an equal number of $XS$ and $XS^\dagger$ sites in each intersecting support region. 

For a 0-form transversal $T$ $(T^\dagger)$ gate, the same complex describes the CSS code and the 1-form $XS$ $(XS^\dagger)$ gate, i.e.~$C^{(0)}_\bullet=C^{(1)}_\bullet$.

\section{Higher-form gauging measurement}

\label{sec:HFGM}

\begin{figure}[t]
\begin{algorithm}[H]
\caption{Higher-form gauging measurement}
\label{alg:GaugeLogical}
\begin{algorithmic}
    \Require A qLDPC code $\mathcal{C}$ with an $h$-form transversal gate~$C_\bullet$. 
    An initial physical code state $\ket{\psi}$. 
    A choice of generators for the $h$-th cohomology group $H^{h}(C_\bullet)$ given by representatives $\{\ell_i\}_{i\in \mathcal{I}}$. 
    A function $F(M,x)$ that returns a solution $y$ to the $\mathbb{F}_2$-linear equation $x=M y$ via an efficient method such as Guassian elimination. 
    \vspace{.1cm}
    
    \Ensure The result $\sigma_i=\pm1$ of measuring each representative $\ell_i$ from a generating set for the $h$-th cohomology group $H^{h}(C_\bullet)$, indexed by $i\in \mathcal{I}$. 
    The post-measurement code state $\ket{\Psi}=G_\sigma\ket{\psi}:=\prod_{i\in \mathcal{I}}\frac{1}{2}(\mathds{1}+\sigma_i U(\ell_i))\ket{\psi}$. 
    \vspace{.1cm}

    \State $\{\sigma_i\}_{i\in \mathcal{I}} \gets \{1\}_{i \in \mathcal{I}}$
    \State $\ket{\Psi} \gets \ket{\psi}$
    \State $\{x_e\}_{e\in C_{h+1}} \gets \{0\}_{e\in C_{h+1}}$
    \ForEach{hyperedge $e$ in $C_{h+1}$} 
    \State $\ket{\Psi} \gets \ket{\Psi}\otimes \ket{0}_e$ 
    \Comment{{\footnotesize Initialize a qubit on $e$}}
    \EndFor
    \ForEach{vertex $v$ in $C_h$}
	\State Measure $A_v$ on $\ket{\Psi}$
        \Comment{{\footnotesize $A_v=U_v\prod_{e\in \delta_{h+1} v}X_e$}}
        \State $\varepsilon \gets $ Measurement result 
        \Comment{{\footnotesize Measurement result is $\pm 1$}}
        \State $\ket{\Psi} \gets \frac{1}{2}(\mathds{1}+\varepsilon A_v)\ket{\Psi}$
        \Comment{{\footnotesize Post-measurement state}}
        \ForEach{$i \in \mathcal{I}$}
            \If{$v \in \mathop{\text{supp}}U(\ell_i)$}
                \State $\sigma_i \leftarrow \varepsilon \sigma_i$
                \Comment{{\footnotesize Update logical measurement}}
            \EndIf
        \EndFor
    \EndFor
    \ForEach{hyperedge $e$ in $C_{h+1}$}
	\State Measure $Z_e$ on $\ket{\Psi}$
        \State $(-1)^{x_e}\gets $ Measurement result
        \Comment{{\footnotesize $x_e$ is~$0$ or $1$}}
        \State $\ket{\Psi} \gets \frac{1}{2}(\mathds{1}+(-1)^{x_e} Z_e)\ket{\Psi}$
        \Comment{{\footnotesize Post-measurement state}}
        \State Discard auxiliary qubit on $e$
    \EndFor
    \State $y \gets F(\delta_{h+1},x)$ 
    \State $\ket{\Psi} \gets  U(y) \ket{\Psi}$
    \Comment{{\footnotesize Apply byproduct operator}}
\end{algorithmic}
\end{algorithm}
\end{figure}


In this section we present our main result. 
The $h$-form gauging measurement is performed in three steps which are formalized in Algorithm~\ref{alg:GaugeLogical}. 
First, an ancilla qubit is initialized in the $\ket{0}$ state for each hyperedge $e$. 
These hyperedges correspond to basis vectors of $C_{h+1}$, which we denote $e\in C_{h+1}$.
Second, generalized Gauss law operators are measured for each vertex $v$ corresponding to basis vectors $v \in C_{h}$
\begin{align}
    A_v := U_v \prod_{e \in \delta_{h+1} v} X_e . 
\end{align}
Third, all ancilla hyperedge qubits are measured in the $Z$ basis and discarded.
Finally, a unitary byproduct operator, corresponding to a partial symmetry, is applied to the vertex qubits. 
This procedure can be viewed as an instance of the gauging measurement procedure from Ref.~\cite{williamson2024low} applied to the hypergraph associated with the boundary map $\partial_{h+1}$ and the potentially non-Pauli on-site representation specified by $U_v$. 
This procedure is highly parallelizable, including the qubit initialization, $A_v$ measurement, and qubit readout.

Performing the $h$-form gauging measurement, rather than 0-form gauging measurements on individual representatives for a generating set of cohomology classes $H^{h}(C_\bullet)$, comes with several advantages. 
First, the $h$-form gauging measurement can be performed fault-tolerantly in a constant number of time steps, while $0$-form measurement requires $d$ time steps~\cite{Cowtan2025Fast}. 
Second, the $h$-form gauging measurement performs the measurement of all cohomology classes $H^{h}(C_\bullet)$ in parallel. 
Third, the $h$-form gauging measurement requires only a linear qubit overhead, even for densely overlapping logical representatives for the cohomology classes $H^{h}(C_\bullet)$, while parallel $0$-form measurement schemes requires an $O(tW(\log t + \log W)$ qubit overhead to measure $t$ logical operators of maximum weight $W$~\cite{zhang2024time,cowtan2025parallel,Hsieh2025Simplified}. 

\begin{figure}[t]
    \centering
    \includegraphics[page=7]{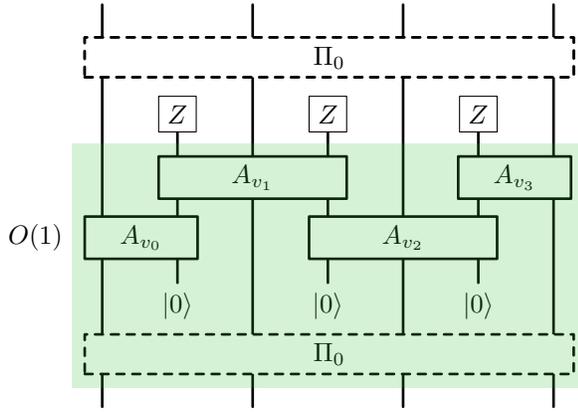}
    \caption{Schematic depiction of the higher-form gauging measurement. 
    Time runs up in this circuit diagram. 
    The box labelled $\Pi_0$ denotes an adaptive operation that prepares the code space in constant time. 
    The next time step shows $\ket{0}$ states being initialized on the hyperedge ancilla qubits. 
    Next, the boxes labelled $A_{v_i}$ denote the measurement of the Gauss's law terms that deform the code. 
    Fault-tolerance is ensured by the fact that products of $A_{v_i}$ terms in $\mathop{\text{Im}}\delta_{h}$ have eigenvalue $+1$ on the codespace. This results in detectors (inside the green shaded region) that can be used to correct measurement errors. 
    Next, the hyperedge qubits are measured out in the $Z$ basis and a byproduct operator is applied. 
    Finally, an adaptive operation is applied that measures the checks of the original code and applies a correction operator to restore the codespace. 
    For simplicity of presentation, the classical measurement results and feedforward operations are not explicitly depicted in this diagram.}
    \label{fig:A}
\end{figure}

\begin{figure}[t]
    \centering
    \includegraphics[page=6]{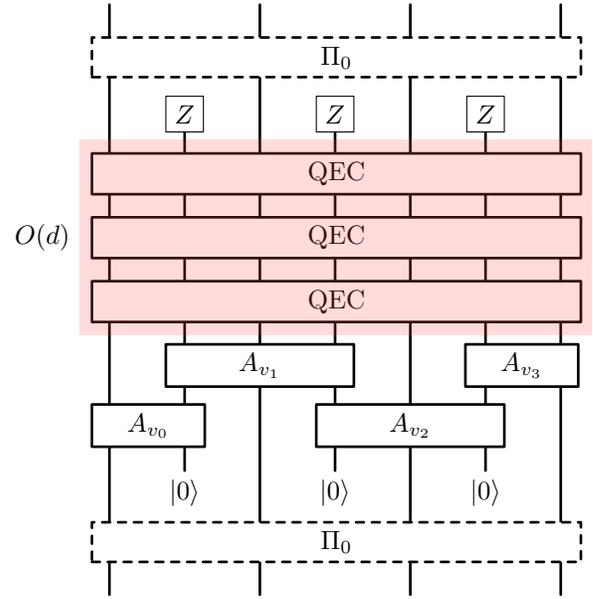}
    \caption{Schematic depiction of the standard gauging measurement~\cite{williamson2024low}. 
    The procedure and notation is similar to Fig.~\ref{fig:A}. 
    In contrast to the higher-form gauging measurement, no terms in $\mathop{\text{Im}}\delta_{h}$ are measured. This implies that $d$ rounds of syndrome extraction and error correction must be performed in the deformed code to ensure a distance against measurement errors (indicated by boxes labelled QEC)~\cite{Cowtan2025Fast}. The time savings of the higher-form gauging measurement are indicated by the red shaded region. 
    }
    \label{fig:B}
\end{figure}

We now present a basic result about the measurement that is achieved by higher-form gauging, followed by some remarks. 

\begin{theorem}[Higher-form gauging measurement]
    The higher-form gauging measurement procedure in Algorithm~\ref{alg:GaugeLogical} measures all $h$-form symmetry operators in $C_\bullet$, specified by $\ker{\delta_{h+1}}$, in parallel and in constant time
    \label{thm:1}
\end{theorem}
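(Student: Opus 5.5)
We follow the standard gauging-measurement argument of Ref.~\cite{williamson2024low}, adapted to the hypergraph defined by $\delta_{h+1}$ and the on-site representation $U_v$. The aim is to show that the composite operation (initialize, measure all $A_v$, measure all $Z_e$, apply the byproduct $U(y)$) acts on the original code state as the projector $G_\sigma$, up to normalization.

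\emph{Step 1: algebraic properties of the Gauss law operators.} The $A_v$ pairwise commute, because the $U_v$ commute by the anomaly-free assumption and the $X_e$ factors commute with each other and with every $U_v$. Each $A_v$ squares to the identity, since $U_v$ represents $\mathbb{Z}_2$. Hence the $A_v$ can be measured simultaneously in constant depth. Sparsity of $\delta_{h+1}$ and of the $U_v$ supports makes each $A_v$ constant weight with bounded overlaps, so the measurement circuit has constant depth. For $c\in C_h$ we then have $\prod_{v} A_v^{c_v} = U(c)\,X(\delta_{h+1}c)$.

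\emph{Step 2: the measurement outcome.} Write the post-measurement state as $\prod_v \tfrac12(1+\varepsilon_v A_v)\,\ket{\psi}\otimes\ket{0}^{\otimes C_{h+1}}$ and expand the product as $2^{-|C_h|}\sum_{c\in C_h}\varepsilon^c\, U(c)X(\delta_{h+1}c)$. Projecting the ancillas onto $\bra{x}$ keeps only the terms with $\delta_{h+1}c=x$. These are $c=y+c'$ with $c'\in\ker\delta_{h+1}$, where $y=F(\delta_{h+1},x)$. Such a $y$ exists because $x$ lies in $\mathrm{Im}\,\delta_{h+1}$: only outcomes with nonzero amplitude occur. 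Applying the byproduct $U(y)$ leaves, up to normalization,
\begin{align}
\sum_{c'\in\ker\delta_{h+1}} \varepsilon^{c'}\, U(c')\ket{\psi} .
\end{align}
This is proportional to the projector onto the joint eigenspace of all $U(c')$, $c'\in\ker\delta_{h+1}$, with eigenvalue $\varepsilon^{c'}=\prod_{v\in\mathrm{supp}\,c'}\varepsilon_v$. The character $c'\mapsto\varepsilon^{c'}$ is a homomorphism, so the projector is well defined. We then factor $\ker\delta_{h+1}=\mathrm{Im}\,\delta_h\oplus\mathrm{span}\{\ell_i\}$. On the codespace, every element of $\mathrm{Im}\,\delta_h$ acts trivially with eigenvalue $+1$, so its eigenvalues $\varepsilon^{\delta_h b}$ must equal $+1$ for outcomes of nonzero probability. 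These constraints are the detectors. The remaining factor is exactly $G_\sigma$, with $\sigma_i=\varepsilon^{\ell_i}$, which matches the classical update in Algorithm~\ref{alg:GaugeLogical}. The code checks are either preserved or restored by the final adaptive correction step.

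\emph{Step 3: constant time and parallelism.} Every stage — initialization, the simultaneous $A_v$ measurements, the $Z$ readout, and the byproduct $U(y)$ (a product of on-site gates) — has constant depth. Measurement errors are caught in a single round, since the products of the $A_v$ over $\mathrm{Im}\,\delta_h$ give deterministic detectors. The growing cohomological distance of the gate complex ensures that no constant-weight measurement fault is undetectable. I expect this last point to be the main obstacle: one must argue that the detectors and the growing distances of $H^h$ and $H_h$ cover every undetectable fault pattern, without repeating rounds for $d$ time steps. The algebraic content in Steps 1–2 is routine.
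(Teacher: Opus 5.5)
Your Steps 1--2 are the paper's argument, and the proposal is correct: you expand $\prod_v(\mathds{1}+\varepsilon_v A_v)$ as $\sum_{c}\varepsilon(c)U(c)X(\delta_{h+1}c)$, note that only outcomes $x=\delta_{h+1}y$ occur, shift $c=c'+y$ to get $U(y)\sum_{c'\in\ker\delta_{h+1}}\varepsilon(c')U(c')\ket{\psi}$, and use that $U(\delta_h b)$ stabilizes the codespace to reduce to $G_\sigma$ (your factorization over $\mathop{\text{Im}}\delta_h\oplus\mathrm{span}\{\ell_i\}$ is a cleaner version of the paper's final step). The fault-tolerance issue you flag in Step 3 as the main obstacle is not needed here, since \emph{constant time} in this theorem only means the noiseless procedure has constant depth; the paper treats fault tolerance separately in Theorem~\ref{thm:main}, where the relevant measurement fault-distance is that of the homology group $H_h(C_\bullet)$, and constant-weight undetectable faults $\partial_{h+1}c'$ do exist but are harmless spacetime stabilizers.
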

\begin{proof}
The $h$-form gauging measurement procedure for $C_\bullet$ applied to an initial state $\ket{\psi}$ results in the final (unnormalized) state 
\begin{align}
    &\bra{x} \prod_{v} (\mathds{1}+\varepsilon_v A_v) \ket{\underline{0}} \ket{\psi}
    \\
    &=
    \bra{x} \sum_{c \in C_h} \varepsilon(c) U(c) X(\delta_{h+1} c) \ket{\underline{0}} \ket{\psi}
    \\
    &= 
     \sum_{c \in C_h} \varepsilon(c)  \bra{x}X(\delta_{h+1} c) \ket{\underline{0}} U(c) \ket{\psi} .
\end{align}
Here, $\varepsilon_v=\pm$ denotes the random outcome of measuring $A_v$, $\varepsilon(c)=\prod_v \varepsilon_v^{c_v}$, $U(c)=\prod_v U_v^{c_v}$, $X(c)=\prod_v X_v^{c_v}$, for an $\mathbb{F}_2$-vector $c\in C_h$, and $\ket{\underline{0}}:=\ket{0}^{\otimes |C_{h+1}|}$.  
The final hyperedge measurement outcomes $x\in C_{h+1}$ must be of the form $x=\partial_{h+1} y$, for some $y\in C_h$, to have nonzero probability of occurring. 
Hence, we have
\begin{align}
     &\sum_{c \in C_h} \varepsilon(c)   \bra{\delta_{h+1} y}X(\delta_{h+1} c) \ket{\underline{0}} U(c) \ket{\psi} 
     \label{eq:HFGM1}
     \\
     &=
     \sum_{c \in C_h} \varepsilon(c)  \braket{\delta_{h+1} (c+y) | \underline{0}} U(c) \ket{\psi} 
     \\
     &=
     \sum_{c' \in C_h}  \varepsilon(c'+y) \braket{\delta_{h+1} c' | \underline{0}} U(c'+y) \ket{\psi} 
     \\
     &=
     \varepsilon(y) U(y) \sum_{c' \in \ker \delta_{h+1}} \varepsilon(c')  U(c') \ket{\psi}.
     \label{eq:HFGM2}
\end{align}
For a code with a growing distance and a sparse $h$-form transversal gate, local operators in $\mathop{\text{Im}} \delta_h$ stabilize the code space. 
In this case, we can further simplify the final state and write it as a projection onto a basis of representatives for the $h$-th cohomology group $H^h(C_\bullet)$
\begin{align}
    &U(y) \frac{1}{|H^h|} \sum_{[\ell]\in H^h} \varepsilon(\ell) U(\ell) \frac{1}{|C_{h-1}|}\sum_{b \in C_{h-1}} U(\delta_h b)
     \ket{\psi}
     \\
     &=
     U(y) \sum_{i \in \mathcal{I}} \frac{1}{2}(\mathds{1}+\varepsilon(\ell_i) U(\ell_i)) \ket{\psi}
     ,
\end{align}
where we have changed the overall normalization, dropped the irrelevant global phase factor $\varepsilon(y)$, introduced the notation $\ell_i,~i\in\mathcal{I},$ for a generating set of representatives for the $h$-th cohomology group, and used that $U(\delta_h b) \ket{\psi}=\ket{\psi}$. 
We have also used that $\varepsilon$ is constant on cohomology classes, as it is a homomorphism that satisfies $\varepsilon(c')=1$ for $c'\in\ker \delta_{h+1}$ when the codespace is stabilized by operators in $\mathop{\text{Im}} \delta_h$. 
\end{proof}

The higher-form gauging measurement in Algorithm~\ref{alg:GaugeLogical} can be generalized to allow hyperedge qubits initialized in a state $\ket{\ell_i}:=X(\ell_i)\ket{\underline{0}}$, for $[\ell_i]\in H^{h+1}(C_\bullet)$ with $\ell_i$ a minimum weight $(h+1)$st cohomology representative. For the trivial $(h+1)$st cohomology class, $[\underline{0}]$, this corresponds to the higher-form gauging introduced above. 
More generally, in Eq.~\eqref{eq:HFGM1}, we have $\ket{\ell_i}$ in place of $\ket{\underline{0}}$ and $\bra{\ell_i+\delta_{h+1}y}$ in place of $\bra{\delta_{h+1}y}$. 
However, this results in the same Eq.~\eqref{eq:HFGM2}. 

For a higher-form transversal gate with on-site symmetry actions that can be simultaneously mapped to Pauli $X$ operators under conjugation by a local unitary circuit, the $A_v$ terms can be mapped to single qubit $X$ operators via a further conjugation with the unitary circuit
\begin{align}
    \prod_{v} \prod_{e \ni v} C_v X_e.
\end{align}
Hence, after performing the combination of the above pair of local unitary circuits, each $A_v$ operator is mapped to a Pauli $X$ which fixes one qubit per vertex to be in the $\ket{+}$ state which can then be decoupled and discarded. 
This is sometimes referred to as disentangling the Gauss's law constraints~\cite{Gaugingpaper}. 

Next, we introduce some basic definitions and a result about the fault tolerance properties of the higher-form gauging measurement. 
To fully specify the deformed code that occurs during the higher-form gauging procedure, we consider the extension of the chain complex $C_\bullet$ in Eq.~\eqref{eq:HFCC} to $C_{h+2}$ 
\begin{align}
     C_0 \mathop{\leftrightarrows}^{\partial_{1}}_{\delta_{1}} 
    \dots
    C_{h-1} \mathop{\leftrightarrows}^{\partial_h}_{\delta_h} 
    C_h \mathop{\leftrightarrows}^{\partial_{h+1}}_{\delta_{h+1}} 
    C_{h+1} 
    \mathop{\leftrightarrows}^{\partial_{h+2}}_{\delta_{h+2}} C_{h+2}
    \dots .
\end{align}
In the context of the $h$-form gauging measurement, elements of $C_{h+2}$ correspond to generators of hyperedge cycles. 

\begin{definition}[Higher Cheeger constant]
    The $h$-Cheeger constant of a chain complex $C_\bullet$ is 
    \begin{align}
        \phi_h(C_\bullet):=\min_{c\in C_h } \max_{z \in \ker{\delta_{h+1}}} \frac{|\delta_{h+1} c|}{|c+z|}
        =
        \min_{\tilde{c}\in C_h } \frac{|\delta_{h+1} \tilde{c}|}{|\tilde{c}|}
    \end{align}
    where $\tilde{c}$ is the representative in the equivalence class $c+\ker \delta_{h+1}$ with minimal weight. 
\end{definition}

The above definition of the $h$-Cheeger constant reduces to the standard Cheeger constant for $h=0$. 
Next, we define gauged operators following the approach in Refs.~\cite{Bulmash2019Gauging,Prem2019Gauging}. 

\begin{definition}[Gauged operator]
    \label{def:GO}
    Consider a symmetric operator $S$, expanded in the basis of on-site symmetry charges
    \begin{align}
        S = \sum_{\chi \in \mathop{\text{Im}} \partial_{h+1}} S_\chi 
    \end{align}
    where 
    \begin{align}
        S_\chi := \sum_{c\in C_h} \chi(c) U(c) S U^{\dagger}(c)
        ,
    \end{align}
    which may evaluate to 0. 
    The deformed operator after gauging $\mathcal{G}[S]$ is 
    \begin{align}
        \mathcal{G}[S] := \sum_{\chi \in \mathop{\text{Im}} \partial_{h+1}} S_\chi Z(\varphi ) ,
    \end{align}
    where $\varphi$ is chosen to be a solution of $\chi = \partial \varphi$ with minimal weight. 
\end{definition}

The choice of solution $\varphi$ to the equation $\chi = \partial \varphi$ is arbitrary, however any choice is equivalent on the simultaneous $+1$ eigenspace of all $Z(c)$ operators for $c\in \ker \partial_{h+1}$. 
Furthermore, for any local operator $S$, the solution can be chosen to have support within a constant neighborhood of the operator's support. 
In this case, all choices are equivalent on the simultaneous $+1$ eigenspace of all $Z(\partial_{h+2} b)$ operators for $b\in C_{h+2}$. 

To gauge a nonsymmetric operator $S$, it is first projected onto its symmetric component
\begin{align}
    \tilde{S} = \sum_{c \in \ker \delta_{h+1}} U(c) S  U^\dagger(c) ,
\end{align}
and then the symmetrized operator $\tilde{S}$ is gauged following the above definition.

\begin{definition}[Gauged code]
    The deformed code in the higher-form gauging measurement is defined to be the simultaneous $+1$ eigenspace of vertex terms $A_v$, plaquette terms $B_p=\prod_{e \in \partial_{h+2} p} Z_e$, and deformed checks from the original code $\mathcal{G}[S_i]$.
\end{definition}

We now present our main result on the fault-tolerance of the higher-form gauging measurement. 
We consider an adversarial noise model where the code distance captures the minimum weight of a nontrivial logical fault on qubits while the measurement fault-distance captures the minimum weight of a nontrivial logical fault on measurement outcomes. 
\begin{theorem}[Fault-tolerance] 
\label{thm:main}
The higher-form gauging measurement is a fault-tolerant procedure, assuming a round of reliable stabilizer checks before and after. 
The code distance of the procedure is bounded by $ \phi_h(C_\bullet) d/2$, where $d$ is the original code distance.
The measurement fault-distance of the procedure is bounded by the distance of the $h$-th homology group $H_{h}(C_\bullet)$. 
\end{theorem}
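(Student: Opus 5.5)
The plan splits Theorem~\ref{thm:main} into three parts. The first is a fault-tolerance argument in the style of space-time detectors. The second is a code-distance bound for the deformed code, obtained by Cheeger-style cleaning. The third is a measurement fault-distance bound, obtained by identifying logical measurement faults with homologically nontrivial chains.

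For fault tolerance we follow the structure of Fig.~\ref{fig:A}. Reliable checks are performed before and after. The detectors are:
\begin{itemize}
\item products $\prod_v A_v^{b'_v}$ with $b'=\delta_h b$, $b\in C_{h-1}$, whose outcomes are deterministic on the codespace because $U(\delta_h b)$ stabilizes it, as argued in the proof of Theorem~\ref{thm:1} using local cleanability;
\item the plaquette products $Z(\partial_{h+2} p)$ of the final $Z_e$ readout, which are deterministic because the ancillas start in $\ket{\underline 0}$ and the $A_v$ only flip $\delta_{h+1}$-images;
\item the deformed checks $\mathcal{G}[S_i]$, compared with the original checks before and after.
\end{itemize}
We then show that every undetected fault configuration is either a stabilizer, or a logical fault on qubits, or a relabelling of the measured logical outcomes $\varepsilon(\ell_i)$. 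This reduces fault tolerance to the two distance bounds.

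\textbf{Code distance.} Take a nontrivial logical operator $L$ of the deformed code (the $+1$ space of $A_v$, $B_p$ and $\mathcal{G}[S_i]$). Write $L = L_V L_E$, with $L_V$ acting on the original qubits and $L_E$ on the hyperedge ancillas. Commutation with the $B_p$ and with the $Z_e$ structure forces the $X$-part of $L_E$ to be $X(\delta_{h+1} c)$ for some $c\in C_h$, up to a cocycle. Multiplying $L$ by $\prod_v A_v^{c_v}$ cleans the ancillas off and moves the operator onto the original qubits. This costs at most $|c|$ on-site terms $U_v$, each of constant weight. Choose $c$ to be the minimal-weight representative $\tilde c$ of its class modulo $\ker\delta_{h+1}$. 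The Cheeger constant then gives
\begin{equation}
|\tilde c| \le |\delta_{h+1}\tilde c|/\phi_h \le |L_E|/\phi_h .
\end{equation}
The cleaned operator is a nontrivial logical of the original code that commutes with the symmetry, so its weight is at least $d$. This yields $|L_V| + |L_E|/\phi_h \gtrsim d$. A case split on whether $|L_E|$ or $|L_V|$ dominates then gives $|L|\ge \min(1,\phi_h)\,d/2$, which becomes the stated $\phi_h d/2$ in the regime $\phi_h\le 1$. We absorb the constant weight of $U_v$ into sparsity, and we must check that cleaning by $A_v$ does not turn $L$ into a stabilizer. This holds because $Z$-type logicals on the ancillas are excluded by the $B_p$ constraints modulo $\ker\partial_{h+1}$.

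\textbf{Measurement fault-distance.} Each $A_v$ is measured once. A set of flipped outcomes $f\in C_h$ is undetected by the $\mathop{\text{Im}}\delta_h$ detectors iff $\partial_h f = 0$. It changes the inferred logical outcomes $\varepsilon(\ell_i)$ iff $f$ pairs nontrivially with some cocycle class, that is, iff $f\notin \mathop{\text{Im}}\partial_{h+1}$. If instead $f=\partial_{h+1} g$, the fault is equivalent to a set of hyperedge faults and is harmless. So logical measurement faults correspond to elements of $\ker\partial_h\setminus\mathop{\text{Im}}\partial_{h+1}$, and their minimal weight is the distance of $H_h(C_\bullet)$. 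Faults on the $Z_e$ readouts are caught by the plaquette detectors, up to cycles that only change the byproduct $U(y)$ by an element of $\ker\delta_{h+1}$, which acts trivially on the post-measurement state by Theorem~\ref{thm:1}.

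\textbf{Main obstacle.} The hard step is the cleaning argument for the code distance. The on-site actions $U_v$ are non-Pauli and may overlap, so deformed logicals are not simply Pauli strings and the gauged checks $\mathcal{G}[S_i]$ involve charge decompositions. The first attempt would treat $A_v$ as stabilizers and use the fact that conjugation by $U(c)$ preserves weight up to constant factors, by sparsity. If that fails, we would restrict to the disentanglable case, where a local circuit maps the $A_v$ to Pauli $X$ and the standard gauging distance argument of Ref.~\cite{williamson2024low} applies directly.
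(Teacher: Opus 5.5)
Your measurement fault-distance argument is essentially the paper's. Products of $A_v$ outcomes over $\delta_h b$ give detectors with check matrix $\partial_h$. Flips in $\mathop{\text{Im}}\partial_{h+1}$ are spacetime stabilizers: they amount to conjugating the $A_v$ by $Z(c')$, which is trivial against $\ket{\underline{0}}$ and the $Z$ readout. What remains is $\ker\partial_h\setminus\mathop{\text{Im}}\partial_{h+1}$.

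\textbf{The code-distance part has a genuine gap: you bound the wrong object.} The protocol never measures $\mathcal{G}[S_i]$ or $B_p$, and never repeats the $A_v$. Your third family of detectors therefore does not exist. The deformed code is occupied for a single step, so its static distance is not what controls the procedure.

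The paper's argument is over spacetime faults. Vertex faults are taken with definite $U_v$-charge, so they commute past the $A_v$ measurements and the byproduct, up to outcome flips. Hyperedge $X$ faults, which are the same as $\ket{0}$-initialization and $Z_e$-readout faults, are cleaned onto the vertices with $\prod_v A_v^{c_v}$. The resulting vertex fault $f$ must satisfy $|f|\ge d$ because $\Pi_0 f G_\sigma \Pi_0=\Pi_0 f \Pi_0 G_\sigma$, which uses that the codespace commutes with the transversal gate. Your step ``the cleaned operator is a nontrivial logical of the original code'' is asserted where this identity is needed.

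The same misreading affects your treatment of readout faults. An undetected readout pattern $r=\delta_{h+1}c$ shifts the solution $y$ by an arbitrary class in $C_h/\ker\delta_{h+1}$, not by an element of $\ker\delta_{h+1}$. The byproduct is then off by a genuine vertex error $U(c)$. Such a fault is benign only because, for $|r|<\phi_h(C_\bullet)\,d$, the minimal representative satisfies $|\tilde c|\le |r|/\phi_h(C_\bullet)<d$. That Cheeger step is the heart of the code-distance bound; the plaquette detectors do not take care of it.

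\textbf{Your cleaning step also silently assumes trivial $(h+1)$-(co)homology.} Commutation with the $B_p$ only gives $r\in\ker\delta_{h+2}$, i.e.\ $r=\ell+\delta_{h+1}c$ with possibly $[\ell]\neq 0$ in $H^{h+1}(C_\bullet)$. No product of $A_v$ removes $X(\ell)$, and the inequality $|\delta_{h+1}c|\le|r|$ fails.

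The paper handles this in two moves:
\begin{itemize}
\item It takes $\ell=\ell_i$ of minimal weight in its class, so that $|\delta_{h+1}c|\le |r|+|\ell_i|\le 2|r|$. This is the source of the $1/2$ in $\phi_h d/2$. Your own inequality $|L_V|+|L_E|/\phi_h\ge d$ already gives $\min(1,\phi_h)\,d$, so the $1/2$ from your case split is only an artifact.
\item It disposes of the residual $X(\ell_i)$ by noting that it amounts to initializing the hyperedges in $\ket{\ell_i}$, which yields the same projection as in Theorem~\ref{thm:1}.
\end{itemize}

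Likewise, your claim that the $B_p$ exclude $Z$-type ancilla logicals presumes $\ker\partial_{h+1}=\mathop{\text{Im}}\partial_{h+2}$. For $z\in\ker\partial_{h+1}\setminus\mathop{\text{Im}}\partial_{h+2}$, $Z(z)$ commutes with every $A_v$, every $B_p$, and every $\mathcal{G}[S_i]$ (the latter carry only $Z$'s on hyperedges), and it need not be a stabilizer. So the static deformed code can carry logicals whose weights are set by the distances of $H_{h+1}(C_\bullet)$ and $H^{h+1}(C_\bullet)$. The theorem does not assume these distances are large, so the static bound you aim for can fail.

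These operators are harmless only dynamically, because the ancillas are prepared and read out in the $Z$ basis. That is why the argument must be phrased in terms of spacetime faults, as the paper does.
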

We remark that the code distance bound can be improved to $ \phi_h(C_\bullet)d$ if the distance of $H^{h+1}(C_\bullet)$ is sufficiently large. 
The fault-tolerant higher-form gauging measurement procedure has constant time overhead and linear space overhead, i.e.~for codes with a constant rate, the constant rate property is preserved.
\begin{proof}
The higher-form gauging measurement proceeds in three steps as outlined above, preceded and followed by a round of reliable syndrome measurements for the original code which is assumed to be in the codespace (potentially after a correction operator has been applied). 
In this setting there are three types of errors. First, there are measurement errors on the $A_v$ operators. 
Second, there are $X$ errors on the hyperedge qubits which are equivalent to $\ket{0}$ state intialization errors and $Z_e$ measurement errors. 
Third, there are qubit errors on the vertex qubits. 

The full set of $A_v$ operators are being measured and hence we consider an error basis spanned by vertex error operators with definite on-site charge under $U_v$. 
This is analogous to the Pauli-error model for Pauli-stabilizer codes. 
The vertex errors can hence be commuted to just before the final reliable syndrome measurement, past the $A_v$ measurements and byproduct operator up to flipping the $A_v$ measurement outcome and a potential global phase factor, see Fig.~\ref{fig:A}. 
In the adversarial noise setting we consider, the byproduct operator is based on the $A_v$ outcomes after such measurement outcome flips are applied. 
This plays the role of error correction on the $A_v$ outcomes in this setting. 

We first consider the measurement errors during the higher-form gauging measurement. 
As we have assumed reliable measurements in the original code before and after the procedure, we restrict our attention to measurement errors of the $A_v$ checks. 
We remark that $\ket{0}$ initialization and $Z_e$ readout errors on the hyperedge qubits are equivalent to Pauli-$X$ errors, which we consider below.
Our focus is on a family of quantum codes with a growing distance and a sparse $h$-form symmetry.
Together, this implies that all coboundary symmetries stabilize the code space, i.e.~$U(\delta_h b)\ket{\psi}=\ket{\psi}$ for $c\in C_{h-1}$ and $\ket{\psi}$ in the code space, as these symmetries are generated by constant weight operators that preserve the code space. 
We have that 
\begin{align}
    A(c):=\prod_{v} A_v^{c_v} = U(c) ,
\end{align}
for any $c\in \ker \delta_{h+1}$. 
In particular, the product of $A_v$ checks on the support of any coboundary $\delta_h b$ stabilizes the codespace. 
Combined with the assumption of reliable stabilizer measurement preceding the higher-form gauging measurement, each such coboundary provides a meta-check on the $A_v$ measurement results as the associated product combined with the preceding reliable stabilizer measurements forms a detector. 
These detectors provide a parity check matrix on the measurement outcomes given by $\partial_h = \delta_h^T$. 
Logical measurement errors of the form $\partial_{h+1} c'$ are equivalent to conjugating the $A_v$ measurements by the operator $Z(c')$. 
Due to the initialization and readout in the $Z$ basis, such logical measurement errors are in fact spacetime stabilizers. 
They do not impact the logical measurement result of any higher-form transversal logical operator $U(c)$ for $c \in \ker \delta_{h+1}$. 
Hence, the distance to measurement errors is given by the distance of the $h$-th cohomology group $H_h(C_\bullet)$.  This is equivalent to the minimal weight of an element in $\ker \partial_h$ that is not in $\mathop{\text{Im}} \partial_{h+1}$. 

Next, we consider the distance against qubit errors during the higher-form gauging measurement.  
First, we consider $X$ errors on the hyperedges. 
Due to the local detectors from the $Z$ initialization and readout of the hyperedge qubits, the logical $X$ errors must correspond to cocycles in $\ker \delta_{h+2}$. 
The readout step also provides a measurement that determines an $(h+1)$-cohomology class for the hyperedge $X$ errors. 
For the cohomology class to be nontrivial, the weight of the error must be as large as the minimal weight of an element of $\ker \delta_{h+2} \backslash \mathop{\text{Im}} \delta_{h+1}$. 
If the $X$ error has lower weight, then it must be a coboundary $X(\delta_{h+1} c)$. 
Without loss of generality, we assume that $c$ is a solution with minimal weight. 
In this case it is equivalent to a vertex error
\begin{align}
    X(\delta_{h+1} c) \prod_{v} A_v^{c_v} = X(c) .
\end{align}
The relative weight satisfies $|X(\delta_{h+1} c)|/|X(c)| \geq \phi_h(C_\bullet)$. 
More generally, for an error in a nontrivial cohomology class $r\in [\ell_i]$, we have $r=\ell_i+\delta_{h+1} c$. 
Then, we have the equivalence of the hyperedge error
\begin{align}
    X(\ell_i+\delta_{h+1} c) \prod_v A_v^{c_v} = U(c) X(\ell_i). 
\end{align}
For the higher-form gauging procedure with hyperedges initialized in the state $\ket{\ell_i}$, this simply results in the error $U(c)$. 
The weight of the error multiplied by the reference cocycle satisfies 
\begin{align}
    2|X(r)|/|X(c)| &\geq (|X(r)|+|X(\ell_i)|)/|X(c)| 
    \\
    &\geq |X(r+\ell_i)|/|X(c)|
    \\
     &= |X(\delta_{h+1} c)|/|X(c)| 
     \\
     &\geq \phi_h(C_\bullet) .
\end{align}
Hence, cleaning an error from the hyperedges to the vertices results in an error that shrinks by a factor $\phi_h(C_\bullet)$ for a cohomologically trivial error, and $\phi_h(C_\bullet)/2$ for a cohomologically nontrivial error which must have a minimum weight set by $\ker \delta_{h+2} \backslash \mathop{\text{Im}} \delta_{h+1}$.

After cleaning the $X$ hyperedge errors, we are left with a logical error $f$ supported only on the vertices before the final code space measurement, see Fig.~\ref{fig:A}. 
That is, we have $\Pi_0 f G_\sigma \Pi_0$, where $\Pi_0$ is the projector onto the codespace (assuming a round of reliable syndrome extraction and feed-forward corrections in the original code) and $G_\sigma$ is the projection operator given by the higher-form gauging measurement in Algorithm~\ref{alg:GaugeLogical}. 
Since the codespace commutes with the higher-form transversal gate we have 
\begin{align}
    \Pi_0 f G_\sigma \Pi_0 = \Pi_0 f \Pi_0 G_\sigma ,
\end{align}
hence for $f$ to be a nontrivial logical error it must have distance at least $d$ (the distance of the original code) as it is projected onto the codespace. 

Hence the original logical error $F$ satisfies
\begin{align}
    |F|\geq \phi_h(C_\bullet)/2 |f| \geq \phi_h(C_\bullet) d /2,
\end{align}
where the first inequality comes from cleaning the error off the hyperedges, resulting in a vertex-only error $f$, and the second inequality comes from the distance bound for~$f$.
\end{proof}

In the standard gauging measurement procedure, additional ancilla systems were introduced to sparsify cycles of a graph, leading to a worst-case overhead of $O(W \log^3 W)$ for a logical of weight $W$~\cite{williamson2024low} (this can be reduced to $O(W \log W)$~\cite{Hsieh2025Simplified,ParsimoniousSurgery}). 
In the higher-form gauging measurement, the cycle checks are not directly measured, so it is unclear if a similar approach would be valuable. The procedure to perform such a sparsification for cycles of a hypergraph is also unclear. 
It is, in principle, possible to improve the measurement fault-distance by performing repeated rounds of deformed code check measurements. However, this requires being able to measure the deformed checks of the gauged code, including the hyperedge cycle checks.

For CSS codes with Z-basis preserving logical operators such as $CZ$ and $XS$, the $Z$ correction can be deferred as it commutes with the gauging operator.
In this case only reliable $Z$-type check measurements in the original code are required for a fault-tolerant higher-form gauging procedure. 
A consequence of this looser condition is that residual $Z$ errors may be present in the state produced by the gauging measurement which must be corrected later. 
This may prove advantageous as it essentially amounts to gaining a quantum resource saving at the cost of a more complicated decoding problem.

If the higher-form gauging measurement is applied to the group of $X$  ($Z$) logical operators of a CSS code the result is a parallel readout of all $X$ ($Z$) logical operators and the deformed code is equivalent to a product state, up to a local unitary. 
In this case the chain complex describing the transversal 1-form Pauli gate is identical to the chain complex describing the CSS code. 
This gauging map for CSS qLDPC codes was discussed in Refs.~\cite{Williamson2016,kubica2018ungauging,shirley2018FoliatedFracton,Rakovszky2023}.
If the on-site representation of this symmetry group is altered to include a product of $X$ ($Z$) operators across multiple codeblocks, the result is a readout of transversal product $X$ ($Z$) logical operators across the codeblocks. 
The $X$ ($Z$) logical group can also be truncated to a subregion to achieve a partial readout. 
This recovers many of the fast and fault-tolerant parallel measurement procedures in Ref.~\cite{Cowtan2025Fast}, including full block-reading and partial block-reading. 

\section{Examples}
\label{sec:Eg}


In this section we discuss examples of 1-form transversal Clifford gates. 
Any CSS code with a 0-form transversal non-Clifford gate, from the third level of the Clifford hierarchy, also supports a 1-form transversal Clifford gate. 
In particular, a 0-form gate from the third level, with on-site actions $U_v$ that preserve the $Z$ basis, defines a 1-form Clifford gate via conjugation of the $X$-type logical operators of the CSS code. 
This results in a transversal 1-form Clifford gate defined by the same chain complex as the CSS code itself, see Eq.~\eqref{eq:1FCSS} below, with on-site representation $U_vX_vU_v^\dagger$ (more generally, there may be multiple on-site representations if multiple qubits are grouped onto a single site, with one generator for the $X$ operator on each qubit).
In this case, the $B_p$ checks of the gauged code are isomorphic to meta-checks of the CSS code~\cite{campbell2019theory}, which are captured by the $\partial_3$ map in the longer chain complex below
\begin{align}
    C_0 \mathop{\leftrightarrows}^{\partial_{1}}_{\delta_{1}} 
    C_1 \mathop{\leftrightarrows}^{\partial_2}_{\delta_2} 
    C_2 \mathop{\leftrightarrows}^{\partial_3}_{\delta_3} C_3.
    \label{eq:1FCSS}
\end{align}
The measurement fault-distance of the 1-form gauging measurement is the distance of the first homology group $H_1(C_\bullet)$, see Theorem~\ref{thm:main}, which in this case corresponds to the $Z$-distance of the original CSS code. 
The distance of the second cohomology group $H^2(C_\bullet)$ which controls the improved distance bound $\phi_1 (C_\bullet) d$ for the 1-form gauging measurement, see below Theorem~\ref{thm:main}, in this case corresponds to the $Z$ meta-check distance of the original CSS code.

This is similar for 0-form gates from the third level that preserve the $X$ basis, and conjugate $Z$-type logical operators.

\subsection{3D Color Code}

The above setting includes the well-known example of the three-dimensional Color Code~\cite{Bombin2007exact}. 
This code is defined on a simplicial complex with 4-colorable vertices, $\{r,g,b,y\}$ see Fig.~\ref{fig:1}, or equivalently the dual of such a complex. 
A qubit is located on each tetrahedra of the primal complex, and the checks are given by 
\begin{align}
    X(v)&:= \prod_{t \ni v} X_t,
    \\
    Z(e)&:= \prod_{t \ni e} Z_t.
\end{align}
There is a meta-check associated to each vertex, generated by the product of adjacent edge stabilizers 
\begin{align}
    \prod_{e\ni v} Z(e)=\mathds{1}.
\end{align}
The chain complex in Eq.~\eqref{eq:1FCSS} for this example has $C_0$ as the space of $X$ type vertices, $C_1$ as the space of tetrahedra, $C_2$ as the space of edges, and $C_3$ as the space of meta-check vertices. 
The boundary (coboundary) matrices are specified by the incidence relations of the relevant objects, e.g.~vertices and tetrahedra for $\partial_1$ ($\delta_1$). 

The tetrahedra of the 4-colored simplicial complex can be bipartitioned into two sets, black $\Lambda_b$ and white $\Lambda_w$. 
The 0-form third-level gate on the Color Code is generated by 
\begin{align}
    T(\Lambda):=\prod_{t\in \Lambda_b} T \prod_{t\in \Lambda_w} T^\dagger.
\end{align}
This gate conjugates the $X$-type logical operators of the Color Code to produce a 1-form Clifford symmetry
\begin{align}
    U(c)=\prod_{t\in \Lambda_b} (\sqrt{i}XS^\dagger)^{c_t}_t \prod_{t\in \Lambda_w} (\sqrt{-i}XS)^{c_t}_t
\end{align}
for $c\in \ker{\delta_2}$. 
These cocycles include membranes given by collections of tetrahedra adjacent to edges of a certain color pair, such as $rg$~\cite{PhysRevB.91.245131}. 

The gauging measurement is performed by adding $\ket{0}$ qubits to the edges of the simplicial complex and measuring 
\begin{align}
    A_t=
    \begin{cases}
        (\sqrt{i}XS^\dagger)_t \prod_{e\in t} X_e, \qquad\ \text{for } t\in\Lambda_b, 
        \\
        (\sqrt{-i}XS)_t \prod_{e\in t} X_e, \qquad \text{for } t\in\Lambda_w,
    \end{cases}
\end{align}
before measuring $Z_e$ on all edge qubits, and applying a partial-symmetry byproduct operator, see Algorithm~\ref{alg:GaugeLogical}.
In this case, the $A_t$ and $\mathcal{G}[Z(e)]$ checks fully specify the deformed code. 
Applying a local unitary circuit
\begin{align}
    V:=T(\Lambda)^\dagger
    \prod_{t\in\Lambda_b} \prod_{e\in t} C_tX_e \,
    \prod_{t\in\Lambda_b} \prod_{e\in t} C_tX_e 
\end{align}
results in 
\begin{align}
    V A_t V^\dagger&= X_t ,\\
    V Z(e) Z^\dagger &= Z_e ,
\end{align}
which stabilizes a product state. 
Hence, in this case the higher-form gauging measurement achieves the same result as initializing the data qubits in $\ket{+}$, applying the 0-form transversal $T\, (T^\dagger)$ gate, and measuring the $Z$ checks of the Color Code using a standard syndrome extraction circuit with ancilla qubits on the edges. 

It is well-known that gauging the 1-form Clifford symmetry condenses a composite loop excitation consisting of a magnetic flux and a symmetry-protected topological excitation~\cite{PhysRevB.91.245131} which is a type of Cheshire charge string~\cite{else2017cheshire}. 
This results in a non-Pauli boundary to the trivial phase~\cite{PhysRevB.91.245131}. 
Hence, from a topological phase point of view, we expect the gauged code to be equivalent to a product state, up to a local unitary.

As discussed at the beginning of this section, the measurement fault-distance in this example is given by the $Z$-distance of the 3D Color Code which is stringlike. 
Since the deformed code corresponds to a product state that has no nontrivial logical operators, the code distance of the protocol is given by the distance of the 3D Color Code. 

The results in Ref.~\cite{Zhu2023} establish that this 1-form gauging measurement can produce magic states at a finite rate when applied to 3D Color Codes on appropriate cellulations of hyperbolic manifolds. 

A similar construction can be made with three copies of the toric code and a transversal CCZ gate which generates 1-form CZ gates~\cite{Vasmer2019three}. However, in this case we do not expect the gauged code to be equivalent to a product state.

\begin{figure}[t]
    \centering
    \includegraphics[page=1]{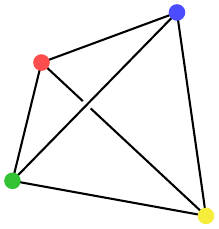}
    \caption{A tetrahedron with 4-colored vertices $\{r,g,b,y\}$. The edges inherit a 2-coloring from the adjacent edges. The faces inherit a color given by the vertex not included in the face. }
    \label{fig:1}
\end{figure}

\subsection{Twisted higher-group gauge theory}
\label{sec:EGB}

We now move on to our main example which supports a 1-form $CZ$ gate without requiring a 0-form $CCZ$ gate. 
The quantum code in this example is derived from the $\mathbb{Z}_2^{(0)}\times \mathbb{Z}_2^{(0)} \times \mathbb{Z}_2^{(1)}$ higher-form symmetry-protected topological order in Ref.~\cite{yoshida2015topological} by gauging the $\mathbb{Z}_2 \times \mathbb{Z}_2$ 0-form symmetry and applying a local unitary disentangler. 
The remaining 1-form symmetry is then gauged to implement a higher-form gauging measurement.
This is a generalization of the (2+1)D twisted quantum double model considered in Ref.~\cite{yoshida2015topological} to a (1-form)$\times$(1-form)$\times$(2-form) \textit{higher-group gauge theory} (HGGT). 
Following Ref.~\cite{davydova2025universal}, we apply this higher-group twisted gauge theory to measure logical Clifford operators via the gauging measurement. 

The initial code we consider is defined on any simplicial complex with 4-colored vertices $\{r,g,b,y\}$, see Fig.~\ref{fig:1}. 
A qubit is assigned to each red and green face only. 
The color of a face is determined by the vertex-color it does not contain, e.g.~a $y$ face has $r,g,b,$ vertices. 
Similarly, each edge $e_{ij}$ has two colors corresponding to the adjacent vertices $i,j\in \{r,g,b,y\}$, with $i\neq j$. 
The checks are 
\begin{align}
    X(v_{r})&:=\prod_{f_{r}\in \mathop{\text{lk}}(v_r)} X_{f_{r}},
    \\
    X(v_{g})&:=\prod_{f_{g}\in \mathop{\text{lk}}(v_g)} X_{f_{g}},
    \\
    Z_r(e_{ij})&:= \prod_{f_r \ni e_{ij}} Z_{f_r}, \qquad \text{for } i,j\neq r,
    \\
    Z_g(e_{ij})&:= \prod_{f_g \ni e_{ij}} Z_{f_g}, \qquad \text{for } i,j \neq g ,
    \label{eq:EG2CSS}
\end{align}
where lk denotes the \textit{link} of a vertex. 
The {link} of an  $i$-simplex $\triangle_i$, in a simplicial complex, can be thought of as the closure of a collection of $(D-i)$ simplices that wrap around the $i$-simplex. 
The link is defined as 
\begin{align}
    \mathop{\text{lk}} (\triangle_i) := \overline{\mathop{\text{st}}(\triangle_i)}\,\backslash\mathop{\text{st}}\big(\overline{\triangle_i}\big),
\end{align}
where st denotes the \text{star} of a simplex
\begin{align}
    \mathop{\text{st}}(\triangle_i):= \bigcup_{\triangle_j':\triangle^{(i)}\in \overline{\triangle_j'}} \triangle_j',
\end{align}
and an overline denotes the closure of a set of simplices, which is the smallest subcomplex containing them. 

In the initial code, there are two meta-checks assigned to each yellow and blue vertex, formed by the product of adjacent red or green edge checks
\begin{align}
    \prod_{e_{ij} \ni v_i} Z_r(e_{ij})
    &= \mathds{1}, \qquad \text{for } i=b,y,\ j\neq r,
    \\
    \prod_{e_{ij} \ni v_i} Z_g(e_{ij})
    &= \mathds{1}, \qquad \text{for } i=b,y,\ j\neq g.
\end{align}
This code corresponds to a pair of decoupled 3D surface codes on red and green sublattices which are formed by connecting pairs of red vertices through adjacent red faces, and similarly for the green sublattice. 
More specifically, the red sublattice consists of edges joining pairs of red vertices that share a red face in the intersection of their links, and similarly for the green sublattice. 

The 1-form $CZ$ gate is specified by a chain complex
\begin{align}
    C_0 \mathop{\leftrightarrows}^{\partial_{1}}_{\delta_{1}} 
    C_1 \mathop{\leftrightarrows}^{\partial_2}_{\delta_2} 
    C_2 \mathop{\leftrightarrows}^{\partial_3}_{\delta_3} C_3,
    \label{eq:2FCSS}
\end{align}
where the blue and yellow vertices of the simplicial complex form a basis of $C_0$, the $by$-edges form a basis of $C_1$, the $rg$-edges form a basis of $C_2$, and the red and green vertices form a basis of $C_3$. 
The map $\delta_1$ sends a blue or yellow vertex to its adjacent $by$-edges, $\delta_2$ sends a $by$-edge to the adjacent $rg$-edges in its link (i.e.~those it shares tetrahedra with), $\delta_3$ sends an $rg$-edge to its adjacent red and green vertex. 
The 1-form gates $U(c)=\prod_{e_{by}} U_{e_{by}}^{c_{e_{by}}}$ are generated by 1-cocycles $c\in \ker \delta_2$ which correspond to 2-manifolds on the $rg$ sublattice, which is Poincar\'e dual to the $by$ sublattice. 
Importantly, the chain complex that specifies the 1-form transversal gate is distinct from the chain complex that defines the CSS code in Eq.~\eqref{eq:EG2CSS}. 

The on-site representations $U_{e_{by}}$ of the 1-form $CZ$ are associated to the edges of the $by$ sublattice. 
To define the on-site representation we first make an arbitrary choice of an orientation and a distinguished adjacent red face $f_0^{e_{by}}$ for each $by$-edge $e_{by}$. 
We suppress the superscript in $f_0^{e_{by}}$ and simply write $f_0$ below where the meaning is clear from context. 
The choice of distinguished face and orientation induces an ordering of the faces that are adjacent to each $by$-edge.
The ordering is found by starting from $f_0$ and following a dual path through the adjacent faces in the order given by the right hand rule applied to the orientation of $e_{by}$. 
The on-site representation is 
\begin{align}
    U_{e_{by}}= \prod_{\substack{f_r \ni e_{by} \\ f_r\neq f_0}} \prod_{f_g \in \gamma_{f_0 f_r}} C_{f_r}Z_{f_g} ,
    \label{eq:EG2U}
\end{align}
where $f_0$ is the red reference face adjacent to $e_{by}$ and $\gamma_{f_0 f_r}$ is the edge path on the green lattice from the green vertex in $f_0$ to the green vertex in $f_r$ following the orientation induced by $e_{by}$. 
We note that edges in the green lattice correspond to green faces in the original simplicial complex. 
Different choices of reference face lead to locally different 1-form symmetries that have the same logical action.

\begin{figure*}[t]
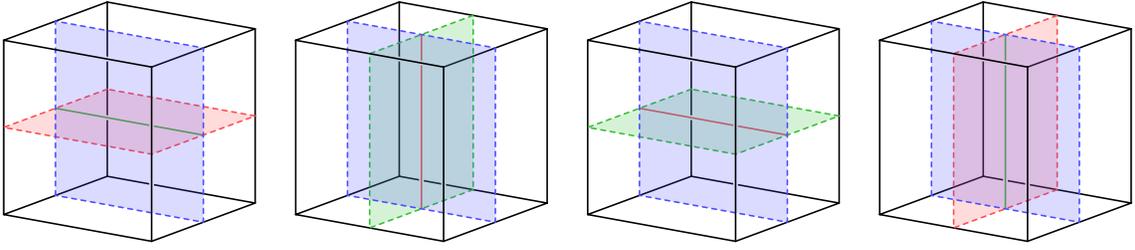

    \centering
    \includegraphics[scale=.6,page=2]{Figures}
    \quad
    \includegraphics[scale=.6,page=3]{Figures}
    \quad
    \includegraphics[scale=.6,page=4]{Figures}
    \quad
    \includegraphics[scale=.6,page=5]{Figures}
    \caption{A schematic depiction of the logical action of the $\hat{x}$-membrane element of the 1-form transversal Clifford gate in example~\ref{sec:EGB} on the 3D torus. 
    The nontrivial element of a 1-form $CZ$ gate on an $\hat{x}$ membrane (blue shaded) conjugates a logical $X$ operator on one copy of surface code (red shaded membrane) to produce a logical $Z$ operator on the other copy (green line), and vice versa. }
    \label{fig:2}
\end{figure*}

\begin{lemma}
\label{lem:EG2}
    The action of the on-site symmetry on a membrane $m_{r}$ of red faces $f_r$ in the simplicial complex (similarly a membrane $m_g$ of green faces $f_g$) is
    \begin{align}
    U_{e_{by}} X(m_i) U_{e_{by}}^\dagger 
    & = Z_i(\gamma_{e_{by}f_0})
    \\
    &:=\prod_{f_i\in \gamma_{e_{by}f_0}} Z_{f_i} 
    \end{align}
    where $i=r,g,$ and $\gamma_{e_{by}f_0}$ is a dual path of faces adjacent to $e_{by}$. This path runs strictly between $f_r$ and $f_r'$ without including these faces. These are faces in the membrane $m_i$ that are adjacent to $e_{by}$, with $f_r<f_r'$ under the ordering induced by the reference face $f_0$ and the orientation of $e_{by}$. We define the path to be empty if $e_{by}\notin m$. 
\end{lemma}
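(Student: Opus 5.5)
The plan is a direct conjugation computation using the Pauli conjugation rule for controlled-$Z$ gates. The on-site operator $U_{e_{by}}$ in Eq.~\eqref{eq:EG2U} is a product of commuting $C_{f_r}Z_{f_g}$ gates, with controls on red faces adjacent to $e_{by}$ and targets on green faces along the green paths $\gamma_{f_0 f_r}$. Since $C_aZ_b\, X_a\, C_aZ_b = X_a Z_b$ and $C_aZ_b$ commutes with $Z$'s and with $X$'s not on $a$ or $b$, conjugating $X(m_r)$ reduces to tracking which gates touch the membrane's support. For $i=r$ the only red qubits acted on by $U_{e_{by}}$ are the faces $f_r\ni e_{by}$ with $f_r\neq f_0$. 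So the result is $X(m_r)$ times $\prod_{f_r\in m_r,\, f_r\ni e_{by},\, f_r\neq f_0} Z_g(\gamma_{f_0 f_r})$.

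First, I will handle the trivial case. If $e_{by}\notin m_r$, meaning no face of $m_r$ contains $e_{by}$, no control is hit and the conjugation is trivial, matching the empty path.

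Second, for $e_{by}\in m_r$, I will use that $m_r$ is a closed membrane (a cycle of the red sublattice surface code, i.e.\ an $X$-logical). Locally around the edge $e_{by}$ it therefore contains exactly two (more generally an even number) of the red faces adjacent to $e_{by}$, say $f_r<f_r'$ in the cyclic order induced by $f_0$ and the orientation. I expect this local two-sheet structure to be the main obstacle. It requires showing that, in the link of $e_{by}$ (a cycle alternating $r$ and $g$ vertices whose edges are the red and green faces around $e_{by}$), the intersection of a red cocycle membrane is an even set of red faces. I would derive this from the $Z_r$ checks on $rg$-edges adjacent to $e_{by}$, i.e.\ the condition $\delta_2$ of the red code. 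If more than two faces occur, I pair them consecutively and the argument below applies pairwise.

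Third, I will combine the two path factors: $Z_g(\gamma_{f_0 f_r})Z_g(\gamma_{f_0 f_r'})=Z_g(\gamma_{f_r f_r'})$, since both green paths start at the green vertex of $f_0$ and follow the same orientation around $e_{by}$. Their common prefix cancels mod 2, which leaves the green edges strictly between $f_r$ and $f_r'$, i.e.\ the dual path of faces strictly between them. If $f_r=f_0$, only one factor appears, and it is already the path from $f_0$ to $f_r'$, consistent with the claim. The green case $i=g$ is argued symmetrically: $X_{f_g}$ conjugated by $C_{f_r}Z_{f_g}$ yields $Z_{f_r}$, and summing over controls $f_r$ whose path contains a given pair of green faces of $m_g$ gives the red faces strictly between them. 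For this case I will check the parity bookkeeping over the cyclic order carefully, up to the ordering convention.
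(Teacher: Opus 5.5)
Your proposal is correct and is the same direct $CZ$-conjugation computation that the paper's one-line proof appeals to. The cancellation of the common prefix, $Z_g(\gamma_{f_0f_r})Z_g(\gamma_{f_0f_r'})=Z_g(\gamma_{f_rf_r'})$, is exactly the content of the lemma, and your bookkeeping correctly shows that the conjugate is $X(m_r)$ times $Z$ on faces of the \emph{opposite} color lying strictly between $f_r$ and $f_r'$. That is how the lemma's $Z_i$ has to be read, consistent with the paper's later statement that red $X$ membranes pick up green $Z$ strings.

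The step you flag as the main obstacle is in fact immediate. The paper defines a membrane to have zero or two faces at each edge. In any case, evenness follows from commutation with the single check $Z_r(e_{by})$ itself. It does not come from checks on $rg$-edges: these carry no $Z_r$ checks, since $Z_r(e_{ij})$ is only defined for $i,j\neq r$.
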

\begin{proof}
    This follows from a direct application of CZ operators in Eq.~\eqref{eq:EG2U} to the $X$ operators on $f_r$ and $f_r'$. 
\end{proof}
In the above lemma, a red membrane $m_r$ corresponds to a 2-chain of only red faces, on the 4-colored simplicial complex, with either zero or two faces adjacent to each edge, and similar for green membranes. 

The above lemma can be applied to calculate the action of the on-site symmetry on more general 2-chains, on the 4-colored simplicial complex, by writing them as a product of membranes.
The operator $Z_r(\gamma_{e_{by}f_0})$ is a truncation of the check $Z_r(e_{by})$, and similar for $Z_g(\gamma_{e_{by}f_0})$ and $Z_g(e_{by})$. 
This is a small segment of red string operator that creates syndromes on the vertex stabilizers $X(v),X(v'),$ for $v\in f_r$, $v'\in f_r'$, with $v,v',\notin e_{by}$, and similar for green. 

\begin{theorem}
    The action of the 1-form $CZ$ gate for a cocycle $c\in \ker \delta_2$ on a membrane $m_i$, $i=r,g,$ in the simplicial complex is
    \begin{align}
        U(c) X(m_i) U(c)^\dagger &= Z_i(c \cap m_i) ,
        \\
        &:= \prod_{e_{by}\in c \cap m_i} Z_i(\gamma_{e_{by}f_0})
    \end{align}
    where $c\cap m_i$ denotes the collection of $e_{by}$-edges supported on the intersection of the 1-cocycle $c$ with the membrane $m_i$, and $\gamma_{e_{by}f_0}$ denotes the string operator segment associated to the action of $U_{e_{by}}$ on $X(m_i)$, see Lemma~\ref{lem:EG2}. 
\end{theorem}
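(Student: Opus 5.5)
The plan is to reduce the statement to Lemma~\ref{lem:EG2} by factorizing $U(c)$ into its on-site pieces and conjugating one site at a time. First I would record the basic structural facts about the on-site operators in Eq.~\eqref{eq:EG2U}. Each $U_{e_{by}}$ is a product of $CZ$ gates $C_{f_r}Z_{f_g}$, so all of them are diagonal in the $Z$ basis. Hence they commute with each other and with every $Z$-type operator. The 2-chain $c$ determines $U(c)=\prod_{e_{by}} U_{e_{by}}^{c_{e_{by}}}$, and the ordering of factors is irrelevant.

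Next I would conjugate $X(m_i)$ by the factors one at a time. Each factor satisfies $U_{e_{by}}X(m_i)U_{e_{by}}^\dagger = X(m_i)\cdot\big(X(m_i)^\dagger U_{e_{by}}X(m_i)U_{e_{by}}^\dagger\big)$. Because $U_{e_{by}}$ is a product of $CZ$'s, the group commutator $c(X(m_i),U_{e_{by}})$ is a Pauli $Z$-type operator. It is exactly the one computed in Lemma~\ref{lem:EG2}. I read the lemma as giving this $Z$-string correction, with $X(m_i)$ retained, since a conjugation of a nontrivial $X$ operator cannot produce a pure $Z$ string; this interpretation should be made explicit. Since each $Z_i(\gamma_{e_{by}f_0})$ is $Z$-type, it commutes with all remaining $U_{e'_{by}}$. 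So when the next factor is applied, only $X(m_i)$ is transformed, and the corrections simply multiply:
\[
U(c)X(m_i)U(c)^\dagger = X(m_i)\prod_{e_{by}:\,c_{e_{by}}=1} Z_i(\gamma_{e_{by}f_0}),
\]
with the factor for $e_{by}$ trivial whenever $e_{by}\notin m_i$, by the lemma's convention.

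The restriction to $e_{by}\in c\cap m_i$ then follows. A $by$-edge contributes only if $c_{e_{by}}=1$ and the edge lies in the membrane, meaning it has two adjacent faces of $m_i$. This is precisely the product defining $Z_i(c\cap m_i)$.

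I also need to handle the case where $m_i$ meets an edge $e_{by}$ in more than two faces. I would decompose $m_i$ into membranes that each meet $e_{by}$ in zero or two faces, as suggested after Lemma~\ref{lem:EG2}, and multiply the resulting strings. The commutators are bilinear modulo the $CZ$ structure, so they are multiplicative in $m_i$.

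The main obstacle is bookkeeping rather than new ideas. Phases from commuting $X$ past $CZ$ must cancel, which they do because $X^2=1$ on faces in the membrane. The pair of faces $f_r<f_r'$ must also be shown to produce exactly the segment between them, with the segments from $f_0$ cancelling on the shared portion. Green membranes are handled symmetrically: the control and target roles of $C_{f_r}Z_{f_g}$ are swapped, giving red strings, and a short check confirms the paper's indexing convention for $Z_i$.
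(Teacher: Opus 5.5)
Your proposal is correct and is essentially the paper's argument: the paper applies Lemma~\ref{lem:EG2} to each on-site factor $U_{e_{by}}$ with $e_{by}\in c\cap m_i$ and multiplies the resulting strings. Your points that the factors are diagonal, so the $Z$-type corrections pass through the remaining factors and simply multiply, and that the identity must be read as $U(c)X(m_i)U(c)^\dagger = X(m_i)\,Z_i(c\cap m_i)$, fill in what the paper leaves implicit.

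One small caution: every $CZ$ in Eq.~\eqref{eq:EG2U} couples a red face to a green face, so your check will show that a red membrane acquires a green $Z$-string and vice versa. This agrees with the paper's later description of the logical action, but not with a literal same-color reading of the subscript in $Z_i$. Also, no stray phases arise because no $CZ$ has both legs on a single-color membrane, not because $X^2=1$.
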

\begin{proof}
    This follows by applying Lemma~\ref{lem:EG2} to $U_{e_{by}}$ on each $by$-edge in $c\cap m_i$.
\end{proof}
We now discuss particular applications of the above theorem. 
If $m_i$ is a homologically trivial membrane, such as $\mathop{\text{lk}} v_i$ for $X(v_i)$, $c \cap m_i$ is a trivial loop and $Z_i(c \cap m_i)$ is a stabilizer. 
In contrast, if $c \in \ker{\delta_2} \backslash \mathop{\text{Im}} \delta_1$ and $m_i$ is homologically nontrivial then $c \cap m_i$ may be a nontrivial loop and $Z_i(c \cap m_i)$ a nontrivial logical operator. 
For instance, on a 3D torus, for $c$ and $m_i$ corresponding to inequivalent embedded 2D tori,  $Z_i(c \cap m_i)$ corresponds to a nontrivial logical string operator, see Fig.~\ref{fig:2}. 

The logical action of the 1-form $CZ$ gate depends on the homology of the manifold on which the code is defined. 
For the 3D torus, the logical action of an $\hat{x}$-oriented $CZ$ membrane on a $\hat{y}$-oriented red $X$ membrane is a $\hat{z}$-oriented green $Z$ string. Similarly, the action on a $\hat{z}$-oriented red $X$ membrane is a $\hat{y}$-oriented green $Z$ string. The action on green $X$ membranes is analogous with the roles of red and green reversed. 
Hence, the logical action of the $\hat{x}$-oriented $CZ$ membrane is $C_{\hat{y}\textcolor{\red}{r}}Z_{\hat{z}\textcolor{\green}{g}} \, C_{\hat{y}\textcolor{\green}{g}}Z_{\hat{z}\textcolor{\red}{r}}$. 
Similarly, the logical actions of the  $\hat{y}$ and $\hat{z}$-oriented $CZ$ membranes are $C_{\hat{x}\textcolor{\red}{r}}Z_{\hat{z}\textcolor{\green}{g}} \, C_{\hat{x}\textcolor{\green}{g}}Z_{\hat{z}\textcolor{\red}{r}}$ and $C_{\hat{x}\textcolor{\red}{r}}Z_{\hat{y}\textcolor{\green}{g}} \, C_{\hat{x}\textcolor{\green}{g}}Z_{\hat{y}\textcolor{\red}{r}}$, respectively. 

We now discuss the 1-form gauging measurement for the $CZ$ gate in this example. 
To perform this 1-form gauging measurement we introduce $\ket{0}$ qubits onto the $rg$-edges, next we measure $A_{e_{by}}$ operators, and finally we measure $Z_{e_{rg}}$ operators followed by applying a partial symmetry byproduct operator. 
The checks of the deformed HGGT code are
\begin{align}
    A_{e_{by}} &= U_{e_{by}} \prod_{e_{rg}\in \mathop{\text{lk}}(e_{by})} X_{e_{rg}} ,
    \label{eq:HGGTC1}
    \\
    \mathcal{G}[X(v_{i})]&=V_{v_i}\prod_{f_{i}\in \mathop{\text{lk}}(v_i)} X_{f_{i}}, &\text{where } i=r,g,
    \\
    \mathcal{G}[Z_r(e_{ij})]&= Z_r(e_{ij}), &\text{where } i,j\neq r,
    \\
    \mathcal{G}[Z_g(e_{ij})]&= Z_g(e_{ij}), & \text{where } i,j \neq g ,
    \\
    Z(v_i) &= \prod_{e_{rg}\in \partial_3 v_i} Z_{e_{rg}}, & \text{where } i=r,g,
    \label{eq:HGGTC2}
\end{align}
where
\begin{align}
    V_{v_i} &= \prod_{\substack{v_j \in \mathop{\text{lk}}(v_i) \\ v_i\neq v_0}} \prod_{f_j \in \gamma_{v_0 v_j}} C_{e_{v_iv_j}}Z_{f_j} ,
    \label{eq:VertexCZs}
\end{align}
for $i,j\in\{r,g\}$ with $i\neq j$, the vertex $v_0\in \mathop{\text{lk}}(v_i)$ is an arbitrary reference vertex, $e_{v_iv_j}$ is the edge from $v_i$ to $v_j$, and $\gamma_{v_0 v_j}$ is a path on the $j=r,g,$ lattice from $v_j$ to $v_0$. 
There is a large amount of freedom involved in the specification of gauged $X$-type checks, as any choice that has the correct action on the $+1$-eigenspace of all $Z(\partial_3 v_i)$ checks leads to an equivalent code. 
This encompasses the choice of an arbitrary reference vertex $v_0$ for each red or green vertex $v_j$, $j=r,g.$

In this example, rather than apply the generic code deformation from Definition~\ref{def:GO}, we have used properties of measuring $X$-$CZ$-product operators on a CSS code. 
In particular, the $X$-type checks can be deformed by multiplying them with appropriate products of $CZ$ operators that are stabilizers in the initial code, as they act trivially on the hyperedge qubits initialized in $\ket{0}$ before the measurement of the $A_{e_{by}}$ operators.

The $A_{e_{by}}$ and $G[X(v_i)]$ operators only commute up to $Z$-type checks. 
Their group commutator is 
\begin{align}
    c(A_{e_{by}},\mathcal{G}[X(v_i)])
    &= Z_i(\gamma) 
\end{align}
for $i=r,g$, 
where $\gamma$ is a loop formed by the unions of paths from $v_j,v_j'\in \mathop{\text{lk}}(e_{by}) \cap \mathop{\text{lk}}(v_i),$ to $v_0$ see Eq.~\eqref{eq:VertexCZs}, and the path $\gamma_{e_{by}f_0}$ defined in Lemma~\ref{lem:EG2}. 
When the operators are not adjacent $\mathop{\text{lk}}(e_{by})\cap \mathop{\text{lk}}(v_i)= \emptyset $ and we have $\gamma=\emptyset,$ $Z(\gamma) =\mathds{1}$. 
Similarly, 
\begin{align}
        c(\mathcal{G}[X(v_r)],\mathcal{G}[X(v_g)])
    &= \begin{cases}
        \mathds{1} , &v_r\neq v_0',\, v_g\neq v_0,
        \\
        Z(v_r),  &v_r= v_0',\, v_g\neq v_0,
        \\
        Z(v_g),  & v_r\neq v_0',\, v_g= v_0,
        \\
        Z(v_r)Z(v_g),  & v_r= v_0',\, v_g= v_0,
    \end{cases} 
    \nonumber
\end{align}
where $v_0$ is the reference vertex for $v_r$, and $v_0'$ is the reference vertex for $v_g$. 

Gauging the 1-form $CZ$ gate measures Clifford logical operator representatives for all elements of the first cohomology group $H^{1}$, defined in Eq.~\eqref{eq:2FCSS}. 
When applied to a logical Pauli state, this can produce a magic state. 
For a logical $\ket{+}^{\otimes k}$ state, this produces a hypergraph magic state~\cite{zhu2023gates,zhu2025topological,Zhu2025b}. 
For the 3D torus topology shown in Fig.~\ref{fig:2}, this results in the following measurement-outcome-dependent logical projection 
\begin{align}
    (\mathds{1}\pm C_{\hat{y}\textcolor{\red}{r}}Z_{\hat{z}\textcolor{\green}{g}} \, C_{\hat{y}\textcolor{\green}{g}}Z_{\hat{z}\textcolor{\red}{r}})
    (\mathds{1}\pm & C_{\hat{x}\textcolor{\red}{r}}Z_{\hat{z}\textcolor{\green}{g}} \, C_{\hat{x}\textcolor{\green}{g}}Z_{\hat{z}\textcolor{\red}{r}})
    \nonumber \\
    &(\mathds{1}\pm C_{\hat{x}\textcolor{\red}{r}}Z_{\hat{y}\textcolor{\green}{g}} \, C_{\hat{x}\textcolor{\green}{g}}Z_{\hat{y}\textcolor{\red}{r}}) .
\end{align}
In particular for the $\ket{+_{\hat{x\textcolor{\red}{r}}}+_{\hat{x\textcolor{\green}{g}}}0_{\hat{y}\textcolor{\red}{r}}0_{\hat{y}\textcolor{\green}{g}}+_{\hat{z}\textcolor{\red}{r}}+_{\hat{z}\textcolor{\green}{g}}}$ state, we obtain a $CZ\otimes CZ$ state, which was discussed in Ref.~\cite{davydova2025universal}. 

On general topologies, the logical action of the 1-form transversal $CZ$ gate on a given surface applies a product of $CZ$ gates to pairs of red and green logical qubits whose logical $X$ membranes have a nontrivial triple intersection number with the $CZ$ surface. 
Similar to the 3D Color Code example, on suitable cellulations such as the hyperbolic manifolds considered in Ref.~\cite{Zhu2023}, this 1-form $CZ$ gauging measurement produces magic states at a finite rate. 

The measurement fault-distance of this 1-form $CZ$ gauging measurement is given by the distance of the first homology group $H_1(C_\bullet)$ from Eq.~\eqref{eq:2FCSS}. 
This is specified by the minimal weight topologically nontrivial $by$-edge path on the 4-colored simplicial complex. 
The code distance of this 1-form $CZ$ gauging measurement is specified by the twisted HGGT code specified by the checks in Eqs.~(\ref{eq:HGGTC1}--\ref{eq:HGGTC2}). 
This distance is determined by the shortest topologically nontrivial dual-path of red (green) faces, and the shortest topologically nontrivial $rg$-edge path. 
Here, the logical operator on the red (green) dual path is simply a product of Pauli-$Z$ operators, while the logical operator on the $rg$ path is a product of Pauli-$X$ operators and $CZ$ gates. 


\section{Discussion} 

In this work we have applied fast qLDPC surgery techniques to fault-tolerantly prepare logical magic states. 
There is a natural fit between the transversal structure of fast qLDPC surgery and parallel logical magic state preparation. 
We introduced the notion of higher-form transversal gates for quantum codes to formalize a set of suitable conditions for fast and fault-tolerant magic state preparation via logical measurement. 

Our main result is a higher form-gauging measurement procedure which has constant time overhead and linear qubit overhead. 
When applied to a constant-rate code this results in a fault-tolerant spacetime overhead proportional to the cost of the unencoded state preparation procedure, i.e.~constant multiplicative overhead. 
The fault distance of the higher-form gauging measurement procedure is determined by the homology of the chain complex that defines the higher-form transversal gate. 
In contrast to the standard gauging measurement~\cite{williamson2024low}, the higher-form gauging measurement is protected from measurement faults by detectors that are formed by local symmetries acting on the code space. 
When combined with single-shot fault-tolerant logical Pauli state preparation, gauging an appropriate transversal 1-form Clifford gate provides a single-shot logical magic state preparation procedure. 

We discussed several examples, including the well-known 3D Color Code which supports a 0-form transversal $T$ gate. 
We explained how CSS codes with 0-form third level gates lead to 1-form Clifford gates in general. 
Furthermore, we introduced an example based on twisted HGGT that was not derived from a 0-form third level gate. 
Finally, we formulated conditions on CSS codes that are necessary and sufficient to support a transversal 1-form $CZ$ or $XS$ $(XS^\dagger)$ gate. 

This points to an interesting research direction in code design for the purposes of fast high-rate logical magic state preparation. 
Rather than focusing on transversal non-Clifford gates, our work points to a path based on the less restrictive condition of transversal 1-form Clifford gates. 
This raises the challenge of constructing fixed-sized instances and asymptotic families of codes satisfying this condition. 
It also raises questions about more general higher form gates:  
Does a 0-form transversal gate from the $r$-th level of the Clifford hierarchy define a transversal $h$-form gate from the $(r-h)$-th level of the Clifford hierarchy, for all $0<h<r$? 
Are higher-form gates from higher levels of the Clifford hierarchy useful~\cite{Ellison2025a,Huang2025a,Warman2025}? 
Do 2-form, and higher form, transversal gates have advantages over 1-form gates for the purposes of fault-tolerant quantum computation?

We have left the decoding problem for the higher-form gauging measurement procedure to future work, as we expect its solution to depend on the family of codes to which the procedure is applied.  
We highlight that the decoding problem has a single-shot structure that makes it distinct in nature from the decoding problem for the standard gauging measurement procedure applied to 0-form Clifford gates~\cite{davydova2025universal,GaugingLDPC}. 
During the 0-form Clifford gauging measurement it is important to remain in a nonabelian code space for a number of timesteps that is proportional to the distance, see Fig.~\ref{fig:B}. 
The use of a nonabelian code space brings with it difficulties including the requirement of a just-in-time decoder~\cite{Bombin20182D,Brown2020FTNC,Scruby2022,Scruby2025,davydova2025universal,GaugingLDPC}.
In contrast, the higher-form gauging procedure spends only a single time step in a deformed nonabelian code. 
This avoids the need to measure deformed checks directly, and instead, relies on deforming back to the original code where the undeformed checks can be measured. 
For Pauli codes, this has the advantage of projecting qubit faults onto Pauli operators. 
We expect this to simplify the decoding problem.



\vspace{1cm}
\noindent \textbf{Note Added.} While this work was in progress, Ref.~\cite{Zhu2026NAQLDPC} appeared online. The construction in Ref.~\cite{Zhu2026NAQLDPC} shares similarities with our example in Section~\ref{sec:EGB}. However, the protocol in Ref.~\cite{Zhu2026NAQLDPC} uses $O(d)$ rounds while our protocol requires only a constant number of rounds. 

\section*{Acknowledgements} 
The author acknowledges inspiring discussions about 0-form Clifford gauging measurements with Yuanjie (Collin) Ren, Katie Chang, Anasuya Lyons, Harry Putterman, Nathanan Tantivasadakarn, Victor Albert, and Ben Brown, as part of a related ongoing collaboration~\cite{GaugingLDPC}. 
The author also acknowledges useful discussions with Alex Cowtan. 
Part of this work was done during the ``Noise-robust Phases of Quantum Matter'' program at the Kavli Institute for Theoretical Physics (KITP) and the ``Diving Deeper into Defects'' program at the Isaac Newton Institute for Mathematical Sciences, Cambridge (INI). 
DJW is supported by the Australian Research Council Discovery Early Career Research Award (DE220100625). 
This research was supported in part by grant NSF PHY-2309135 to the KITP and in part by EPSRC grant no EP/R014604/1 to the INI.

\bibliography{references.bib}

\begin{thebibliography}{98}%
\makeatletter
\providecommand \@ifxundefined [1]{%
 \@ifx{#1\undefined}
}%
\providecommand \@ifnum [1]{%
 \ifnum #1\expandafter \@firstoftwo
 \else \expandafter \@secondoftwo
 \fi
}%
\providecommand \@ifx [1]{%
 \ifx #1\expandafter \@firstoftwo
 \else \expandafter \@secondoftwo
 \fi
}%
\providecommand \natexlab [1]{#1}%
\providecommand \enquote  [1]{``#1''}%
\providecommand \bibnamefont  [1]{#1}%
\providecommand \bibfnamefont [1]{#1}%
\providecommand \citenamefont [1]{#1}%
\providecommand \href@noop [0]{\@secondoftwo}%
\providecommand \href [0]{\begingroup \@sanitize@url \@href}%
\providecommand \@href[1]{\@@startlink{#1}\@@href}%
\providecommand \@@href[1]{\endgroup#1\@@endlink}%
\providecommand \@sanitize@url [0]{\catcode `\\12\catcode `\$12\catcode `\&12\catcode `\#12\catcode `\^12\catcode `\_12\catcode `\%12\relax}%
\providecommand \@@startlink[1]{}%
\providecommand \@@endlink[0]{}%
\providecommand \url  [0]{\begingroup\@sanitize@url \@url }%
\providecommand \@url [1]{\endgroup\@href {#1}{\urlprefix }}%
\providecommand \urlprefix  [0]{URL }%
\providecommand \Eprint [0]{\href }%
\providecommand \doibase [0]{https://doi.org/}%
\providecommand \selectlanguage [0]{\@gobble}%
\providecommand \bibinfo  [0]{\@secondoftwo}%
\providecommand \bibfield  [0]{\@secondoftwo}%
\providecommand \translation [1]{[#1]}%
\providecommand \BibitemOpen [0]{}%
\providecommand \bibitemStop [0]{}%
\providecommand \bibitemNoStop [0]{.\EOS\space}%
\providecommand \EOS [0]{\spacefactor3000\relax}%
\providecommand \BibitemShut  [1]{\csname bibitem#1\endcsname}%
\let\auto@bib@innerbib\@empty
\bibitem [{\citenamefont {Yoder}\ \emph {et~al.}(2025)\citenamefont {Yoder}, \citenamefont {Schoute}, \citenamefont {Rall}, \citenamefont {Pritchett}, \citenamefont {Gambetta}, \citenamefont {Cross}, \citenamefont {Carroll},\ and\ \citenamefont {Beverland}}]{yoder2025tour}%
  \BibitemOpen
  \bibfield  {author} {\bibinfo {author} {\bibfnamefont {T.~J.}\ \bibnamefont {Yoder}}, \bibinfo {author} {\bibfnamefont {E.}~\bibnamefont {Schoute}}, \bibinfo {author} {\bibfnamefont {P.}~\bibnamefont {Rall}}, \bibinfo {author} {\bibfnamefont {E.}~\bibnamefont {Pritchett}}, \bibinfo {author} {\bibfnamefont {J.~M.}\ \bibnamefont {Gambetta}}, \bibinfo {author} {\bibfnamefont {A.~W.}\ \bibnamefont {Cross}}, \bibinfo {author} {\bibfnamefont {M.}~\bibnamefont {Carroll}},\ and\ \bibinfo {author} {\bibfnamefont {M.~E.}\ \bibnamefont {Beverland}},\ }\href {https://arxiv.org/abs/2506.03094} {\bibinfo {title} {Tour de gross: A modular quantum computer based on bivariate bicycle codes}} (\bibinfo {year} {2025}),\ \Eprint {https://arxiv.org/abs/2506.03094} {arXiv:2506.03094 [quant-ph]} \BibitemShut {NoStop}%
\bibitem [{\citenamefont {Panteleev}\ and\ \citenamefont {Kalachev}(2022)}]{panteleev2022asymptotically}%
  \BibitemOpen
  \bibfield  {author} {\bibinfo {author} {\bibfnamefont {P.}~\bibnamefont {Panteleev}}\ and\ \bibinfo {author} {\bibfnamefont {G.}~\bibnamefont {Kalachev}},\ }\bibfield  {title} {\bibinfo {title} {{Asymptotically good Quantum and locally testable classical {LDPC} codes}},\ }in\ \href {https://doi.org/10.1145/3519935.3520017} {\emph {\bibinfo {booktitle} {Proceedings of the 54th Annual ACM SIGACT Symposium on Theory of Computing}}},\ \bibinfo {series and number} {STOC 2022}\ (\bibinfo  {publisher} {Association for Computing Machinery},\ \bibinfo {address} {New York, NY, USA},\ \bibinfo {year} {2022})\ p.\ \bibinfo {pages} {375–388}\BibitemShut {NoStop}%
\bibitem [{\citenamefont {Breuckmann}\ and\ \citenamefont {Eberhardt}(2021{\natexlab{a}})}]{breuckmann2021balanced}%
  \BibitemOpen
  \bibfield  {author} {\bibinfo {author} {\bibfnamefont {N.~P.}\ \bibnamefont {Breuckmann}}\ and\ \bibinfo {author} {\bibfnamefont {J.~N.}\ \bibnamefont {Eberhardt}},\ }\bibfield  {title} {\bibinfo {title} {{Balanced Product Quantum Codes}},\ }\href {https://doi.org/10.1109/TIT.2021.3097347} {\bibfield  {journal} {\bibinfo  {journal} {IEEE Transactions on Information Theory}\ }\textbf {\bibinfo {volume} {67}},\ \bibinfo {pages} {6653} (\bibinfo {year} {2021}{\natexlab{a}})}\BibitemShut {NoStop}%
\bibitem [{\citenamefont {Leverrier}\ and\ \citenamefont {Zemor}(2022)}]{leverrier2022quantum}%
  \BibitemOpen
  \bibfield  {author} {\bibinfo {author} {\bibfnamefont {A.}~\bibnamefont {Leverrier}}\ and\ \bibinfo {author} {\bibfnamefont {G.}~\bibnamefont {Zemor}},\ }\bibfield  {title} {\bibinfo {title} {{Quantum Tanner codes}},\ }in\ \href {https://doi.org/10.1109/FOCS54457.2022.00117} {\emph {\bibinfo {booktitle} {2022 IEEE 63rd Annual Symposium on Foundations of Computer Science (FOCS)}}}\ (\bibinfo  {publisher} {IEEE Computer Society},\ \bibinfo {address} {Los Alamitos, CA, USA},\ \bibinfo {year} {2022})\ pp.\ \bibinfo {pages} {872--883}\BibitemShut {NoStop}%
\bibitem [{\citenamefont {Dinur}\ \emph {et~al.}(2022)\citenamefont {Dinur}, \citenamefont {Evra}, \citenamefont {Livne}, \citenamefont {Lubotzky},\ and\ \citenamefont {Mozes}}]{dinur2022locally}%
  \BibitemOpen
  \bibfield  {author} {\bibinfo {author} {\bibfnamefont {I.}~\bibnamefont {Dinur}}, \bibinfo {author} {\bibfnamefont {S.}~\bibnamefont {Evra}}, \bibinfo {author} {\bibfnamefont {R.}~\bibnamefont {Livne}}, \bibinfo {author} {\bibfnamefont {A.}~\bibnamefont {Lubotzky}},\ and\ \bibinfo {author} {\bibfnamefont {S.}~\bibnamefont {Mozes}},\ }\bibfield  {title} {\bibinfo {title} {{Locally testable codes with constant rate, distance, and locality}},\ }in\ \href {https://doi.org/10.1145/3519935.3520024} {\emph {\bibinfo {booktitle} {Proceedings of the 54th Annual ACM SIGACT Symposium on Theory of Computing}}},\ \bibinfo {series and number} {STOC 2022}\ (\bibinfo  {publisher} {Association for Computing Machinery},\ \bibinfo {address} {New York, NY, USA},\ \bibinfo {year} {2022})\ p.\ \bibinfo {pages} {357–374}\BibitemShut {NoStop}%
\bibitem [{\citenamefont {Gottesman}(2013)}]{gottesman2013fault}%
  \BibitemOpen
  \bibfield  {author} {\bibinfo {author} {\bibfnamefont {D.}~\bibnamefont {Gottesman}},\ }\bibfield  {title} {\bibinfo {title} {{Fault-tolerant quantum computation with constant overhead}},\ }\href@noop {} {\bibfield  {journal} {\bibinfo  {journal} {arXiv preprint arXiv:1310.2984}\ } (\bibinfo {year} {2013})}\BibitemShut {NoStop}%
\bibitem [{\citenamefont {Nguyen}\ and\ \citenamefont {Pattison}(2024)}]{nguyen2024quantum}%
  \BibitemOpen
  \bibfield  {author} {\bibinfo {author} {\bibfnamefont {Q.~T.}\ \bibnamefont {Nguyen}}\ and\ \bibinfo {author} {\bibfnamefont {C.~A.}\ \bibnamefont {Pattison}},\ }\bibfield  {title} {\bibinfo {title} {{Quantum fault tolerance with constant-space and logarithmic-time overheads}},\ }\href@noop {} {\bibfield  {journal} {\bibinfo  {journal} {arXiv preprint arXiv:2411.03632}\ } (\bibinfo {year} {2024})}\BibitemShut {NoStop}%
\bibitem [{\citenamefont {Tamiya}\ \emph {et~al.}(2024)\citenamefont {Tamiya}, \citenamefont {Koashi},\ and\ \citenamefont {Yamasaki}}]{tamiya2024polylog}%
  \BibitemOpen
  \bibfield  {author} {\bibinfo {author} {\bibfnamefont {S.}~\bibnamefont {Tamiya}}, \bibinfo {author} {\bibfnamefont {M.}~\bibnamefont {Koashi}},\ and\ \bibinfo {author} {\bibfnamefont {H.}~\bibnamefont {Yamasaki}},\ }\bibfield  {title} {\bibinfo {title} {{Polylog-time-and constant-space-overhead fault-tolerant quantum computation with quantum low-density parity-check codes}},\ }\href@noop {} {\bibfield  {journal} {\bibinfo  {journal} {arXiv preprint arXiv:2411.03683}\ } (\bibinfo {year} {2024})}\BibitemShut {NoStop}%
\bibitem [{\citenamefont {Nielsen}\ and\ \citenamefont {Chuang}(2012)}]{NielsenChuang}%
  \BibitemOpen
  \bibfield  {author} {\bibinfo {author} {\bibfnamefont {M.~A.}\ \bibnamefont {Nielsen}}\ and\ \bibinfo {author} {\bibfnamefont {I.~L.}\ \bibnamefont {Chuang}},\ }\href {https://doi.org/10.1017/cbo9780511976667} {\emph {\bibinfo {title} {{Quantum Computation and Quantum Information: 10th Anniversary Edition}}}}\ (\bibinfo  {publisher} {Cambridge University Press},\ \bibinfo {year} {2012})\BibitemShut {NoStop}%
\bibitem [{\citenamefont {Gottesman}(1997)}]{gottesman1997stabilizer}%
  \BibitemOpen
  \bibfield  {author} {\bibinfo {author} {\bibfnamefont {D.}~\bibnamefont {Gottesman}},\ }\emph {\bibinfo {title} {{Stabilizer codes and quantum error correction}}},\ \href {https://thesis.library.caltech.edu/2900/} {Ph.D. thesis},\ \bibinfo  {school} {California Institute of Technology} (\bibinfo {year} {1997})\BibitemShut {NoStop}%
\bibitem [{\citenamefont {Bravyi}\ \emph {et~al.}(2016)\citenamefont {Bravyi}, \citenamefont {Smith},\ and\ \citenamefont {Smolin}}]{bravyi2016trading}%
  \BibitemOpen
  \bibfield  {author} {\bibinfo {author} {\bibfnamefont {S.}~\bibnamefont {Bravyi}}, \bibinfo {author} {\bibfnamefont {G.}~\bibnamefont {Smith}},\ and\ \bibinfo {author} {\bibfnamefont {J.~A.}\ \bibnamefont {Smolin}},\ }\bibfield  {title} {\bibinfo {title} {{Trading Classical and Quantum Computational Resources}},\ }\bibfield  {journal} {\bibinfo  {journal} {Physical Review X}\ }\textbf {\bibinfo {volume} {6}},\ \href {https://doi.org/10.1103/physrevx.6.021043} {10.1103/physrevx.6.021043} (\bibinfo {year} {2016})\BibitemShut {NoStop}%
\bibitem [{\citenamefont {Cohen}\ \emph {et~al.}(2022)\citenamefont {Cohen}, \citenamefont {Kim}, \citenamefont {Bartlett},\ and\ \citenamefont {Brown}}]{cohen2022low}%
  \BibitemOpen
  \bibfield  {author} {\bibinfo {author} {\bibfnamefont {L.~Z.}\ \bibnamefont {Cohen}}, \bibinfo {author} {\bibfnamefont {I.~H.}\ \bibnamefont {Kim}}, \bibinfo {author} {\bibfnamefont {S.~D.}\ \bibnamefont {Bartlett}},\ and\ \bibinfo {author} {\bibfnamefont {B.~J.}\ \bibnamefont {Brown}},\ }\bibfield  {title} {\bibinfo {title} {{Low-overhead fault-tolerant quantum computing using long-range connectivity}},\ }\href {https://doi.org/10.1126/sciadv.abn1717} {\bibfield  {journal} {\bibinfo  {journal} {Science Advances}\ }\textbf {\bibinfo {volume} {8}},\ \bibinfo {pages} {eabn1717} (\bibinfo {year} {2022})}\BibitemShut {NoStop}%
\bibitem [{\citenamefont {Cross}\ \emph {et~al.}(2024)\citenamefont {Cross}, \citenamefont {He}, \citenamefont {Rall},\ and\ \citenamefont {Yoder}}]{cross2024improved}%
  \BibitemOpen
  \bibfield  {author} {\bibinfo {author} {\bibfnamefont {A.}~\bibnamefont {Cross}}, \bibinfo {author} {\bibfnamefont {Z.}~\bibnamefont {He}}, \bibinfo {author} {\bibfnamefont {P.}~\bibnamefont {Rall}},\ and\ \bibinfo {author} {\bibfnamefont {T.}~\bibnamefont {Yoder}},\ }\bibfield  {title} {\bibinfo {title} {{Improved {QLDPC} Surgery: Logical Measurements and Bridging Codes}},\ }\href@noop {} {\bibfield  {journal} {\bibinfo  {journal} {arXiv preprint arXiv:2407.18393}\ } (\bibinfo {year} {2024})}\BibitemShut {NoStop}%
\bibitem [{\citenamefont {Xu}\ \emph {et~al.}(2024)\citenamefont {Xu}, \citenamefont {Zhou}, \citenamefont {Zheng}, \citenamefont {Bluvstein}, \citenamefont {Ataides}, \citenamefont {Lukin},\ and\ \citenamefont {Jiang}}]{xu2024fast}%
  \BibitemOpen
  \bibfield  {author} {\bibinfo {author} {\bibfnamefont {Q.}~\bibnamefont {Xu}}, \bibinfo {author} {\bibfnamefont {H.}~\bibnamefont {Zhou}}, \bibinfo {author} {\bibfnamefont {G.}~\bibnamefont {Zheng}}, \bibinfo {author} {\bibfnamefont {D.}~\bibnamefont {Bluvstein}}, \bibinfo {author} {\bibfnamefont {J.}~\bibnamefont {Ataides}}, \bibinfo {author} {\bibfnamefont {M.~D.}\ \bibnamefont {Lukin}},\ and\ \bibinfo {author} {\bibfnamefont {L.}~\bibnamefont {Jiang}},\ }\bibfield  {title} {\bibinfo {title} {{Fast and Parallelizable Logical Computation with Homological Product Codes}},\ }\href@noop {} {\bibfield  {journal} {\bibinfo  {journal} {arXiv preprint arXiv:2407.18490}\ } (\bibinfo {year} {2024})}\BibitemShut {NoStop}%
\bibitem [{\citenamefont {Williamson}\ and\ \citenamefont {Yoder}(2024)}]{williamson2024low}%
  \BibitemOpen
  \bibfield  {author} {\bibinfo {author} {\bibfnamefont {D.~J.}\ \bibnamefont {Williamson}}\ and\ \bibinfo {author} {\bibfnamefont {T.~J.}\ \bibnamefont {Yoder}},\ }\bibfield  {title} {\bibinfo {title} {{Low-overhead fault-tolerant quantum computation by gauging logical operators}},\ }\href@noop {} {\bibfield  {journal} {\bibinfo  {journal} {arXiv preprint arXiv:2410.02213}\ } (\bibinfo {year} {2024})}\BibitemShut {NoStop}%
\bibitem [{\citenamefont {Ide}\ \emph {et~al.}(2024)\citenamefont {Ide}, \citenamefont {Gowda}, \citenamefont {Nadkarni},\ and\ \citenamefont {Dauphinais}}]{ide2024fault}%
  \BibitemOpen
  \bibfield  {author} {\bibinfo {author} {\bibfnamefont {B.}~\bibnamefont {Ide}}, \bibinfo {author} {\bibfnamefont {M.~G.}\ \bibnamefont {Gowda}}, \bibinfo {author} {\bibfnamefont {P.~J.}\ \bibnamefont {Nadkarni}},\ and\ \bibinfo {author} {\bibfnamefont {G.}~\bibnamefont {Dauphinais}},\ }\bibfield  {title} {\bibinfo {title} {{Fault-tolerant logical measurements via homological measurement}},\ }\href@noop {} {\bibfield  {journal} {\bibinfo  {journal} {arXiv preprint arXiv:2410.02753}\ } (\bibinfo {year} {2024})}\BibitemShut {NoStop}%
\bibitem [{\citenamefont {Swaroop}\ \emph {et~al.}(2024)\citenamefont {Swaroop}, \citenamefont {Jochym-O'Connor},\ and\ \citenamefont {Yoder}}]{swaroop2024universal}%
  \BibitemOpen
  \bibfield  {author} {\bibinfo {author} {\bibfnamefont {E.}~\bibnamefont {Swaroop}}, \bibinfo {author} {\bibfnamefont {T.}~\bibnamefont {Jochym-O'Connor}},\ and\ \bibinfo {author} {\bibfnamefont {T.~J.}\ \bibnamefont {Yoder}},\ }\bibfield  {title} {\bibinfo {title} {{Universal adapters between quantum {LDPC} codes}},\ }\href@noop {} {\bibfield  {journal} {\bibinfo  {journal} {arXiv preprint arXiv:2410.03628}\ } (\bibinfo {year} {2024})}\BibitemShut {NoStop}%
\bibitem [{\citenamefont {Cowtan}\ \emph {et~al.}(2025{\natexlab{a}})\citenamefont {Cowtan}, \citenamefont {He}, \citenamefont {Williamson},\ and\ \citenamefont {Yoder}}]{cowtan2025parallel}%
  \BibitemOpen
  \bibfield  {author} {\bibinfo {author} {\bibfnamefont {A.}~\bibnamefont {Cowtan}}, \bibinfo {author} {\bibfnamefont {Z.}~\bibnamefont {He}}, \bibinfo {author} {\bibfnamefont {D.~J.}\ \bibnamefont {Williamson}},\ and\ \bibinfo {author} {\bibfnamefont {T.~J.}\ \bibnamefont {Yoder}},\ }\bibfield  {title} {\bibinfo {title} {{Parallel Logical Measurements via Quantum Code Surgery}},\ }\href@noop {} {\bibfield  {journal} {\bibinfo  {journal} {arXiv preprint arXiv:2503.05003}\ } (\bibinfo {year} {2025}{\natexlab{a}})}\BibitemShut {NoStop}%
\bibitem [{\citenamefont {He}\ \emph {et~al.}(2025{\natexlab{a}})\citenamefont {He}, \citenamefont {Cowtan}, \citenamefont {Williamson},\ and\ \citenamefont {Yoder}}]{he2025extractors}%
  \BibitemOpen
  \bibfield  {author} {\bibinfo {author} {\bibfnamefont {Z.}~\bibnamefont {He}}, \bibinfo {author} {\bibfnamefont {A.}~\bibnamefont {Cowtan}}, \bibinfo {author} {\bibfnamefont {D.~J.}\ \bibnamefont {Williamson}},\ and\ \bibinfo {author} {\bibfnamefont {T.~J.}\ \bibnamefont {Yoder}},\ }\bibfield  {title} {\bibinfo {title} {Extractors: {QLDPC} architectures for efficient pauli-based computation},\ }\href@noop {} {\bibfield  {journal} {\bibinfo  {journal} {arXiv preprint arXiv:2503.10390}\ } (\bibinfo {year} {2025}{\natexlab{a}})}\BibitemShut {NoStop}%
\bibitem [{\citenamefont {Xu}\ \emph {et~al.}(2025)\citenamefont {Xu}, \citenamefont {Zhou}, \citenamefont {Bluvstein}, \citenamefont {Cain}, \citenamefont {Kalinowski}, \citenamefont {Preskill}, \citenamefont {Lukin},\ and\ \citenamefont {Maskara}}]{Xu2025Batched}%
  \BibitemOpen
  \bibfield  {author} {\bibinfo {author} {\bibfnamefont {Q.}~\bibnamefont {Xu}}, \bibinfo {author} {\bibfnamefont {H.}~\bibnamefont {Zhou}}, \bibinfo {author} {\bibfnamefont {D.}~\bibnamefont {Bluvstein}}, \bibinfo {author} {\bibfnamefont {M.}~\bibnamefont {Cain}}, \bibinfo {author} {\bibfnamefont {M.}~\bibnamefont {Kalinowski}}, \bibinfo {author} {\bibfnamefont {J.}~\bibnamefont {Preskill}}, \bibinfo {author} {\bibfnamefont {M.~D.}\ \bibnamefont {Lukin}},\ and\ \bibinfo {author} {\bibfnamefont {N.}~\bibnamefont {Maskara}},\ }\bibfield  {title} {\bibinfo {title} {{Batched high-rate logical operations for quantum LDPC codes}},\ }\href {http://arxiv.org/abs/2510.06159} {\bibfield  {journal} {\bibinfo  {journal} {arXiv preprint}\ } (\bibinfo {year} {2025})},\ \Eprint {https://arxiv.org/abs/2510.06159} {arXiv:2510.06159} \BibitemShut {NoStop}%
\bibitem [{\citenamefont {Zheng}\ \emph {et~al.}(2025)\citenamefont {Zheng}, \citenamefont {Jiang},\ and\ \citenamefont {Xu}}]{zheng2025high}%
  \BibitemOpen
  \bibfield  {author} {\bibinfo {author} {\bibfnamefont {G.}~\bibnamefont {Zheng}}, \bibinfo {author} {\bibfnamefont {L.}~\bibnamefont {Jiang}},\ and\ \bibinfo {author} {\bibfnamefont {Q.}~\bibnamefont {Xu}},\ }\bibfield  {title} {\bibinfo {title} {High-rate surgery: towards constant-overhead logical operations},\ }\href@noop {} {\bibfield  {journal} {\bibinfo  {journal} {arXiv preprint arXiv:2510.08523}\ } (\bibinfo {year} {2025})}\BibitemShut {NoStop}%
\bibitem [{\citenamefont {Baspin}\ \emph {et~al.}(2025)\citenamefont {Baspin}, \citenamefont {Berent},\ and\ \citenamefont {Cohen}}]{baspin2025fast}%
  \BibitemOpen
  \bibfield  {author} {\bibinfo {author} {\bibfnamefont {N.}~\bibnamefont {Baspin}}, \bibinfo {author} {\bibfnamefont {L.}~\bibnamefont {Berent}},\ and\ \bibinfo {author} {\bibfnamefont {L.~Z.}\ \bibnamefont {Cohen}},\ }\href {https://arxiv.org/abs/2510.04521} {\bibinfo {title} {Fast surgery for quantum ldpc codes}} (\bibinfo {year} {2025}),\ \Eprint {https://arxiv.org/abs/2510.04521} {arXiv:2510.04521 [quant-ph]} \BibitemShut {NoStop}%
\bibitem [{\citenamefont {Cowtan}\ \emph {et~al.}(2025{\natexlab{b}})\citenamefont {Cowtan}, \citenamefont {He}, \citenamefont {Williamson},\ and\ \citenamefont {Yoder}}]{Cowtan2025Fast}%
  \BibitemOpen
  \bibfield  {author} {\bibinfo {author} {\bibfnamefont {A.}~\bibnamefont {Cowtan}}, \bibinfo {author} {\bibfnamefont {Z.}~\bibnamefont {He}}, \bibinfo {author} {\bibfnamefont {D.~J.}\ \bibnamefont {Williamson}},\ and\ \bibinfo {author} {\bibfnamefont {T.~J.}\ \bibnamefont {Yoder}},\ }\bibfield  {title} {\bibinfo {title} {{Fast and fault-tolerant logical measurements: Auxiliary hypergraphs and transversal surgery}},\ }\href {https://arxiv.org/pdf/2510.14895} {\bibfield  {journal} {\bibinfo  {journal} {ArXiv preprint}\ } (\bibinfo {year} {2025}{\natexlab{b}})},\ \Eprint {https://arxiv.org/abs/2510.14895} {arXiv:2510.14895} \BibitemShut {NoStop}%
\bibitem [{\citenamefont {Bomb{\'\i}n}(2015)}]{bombin2015gauge}%
  \BibitemOpen
  \bibfield  {author} {\bibinfo {author} {\bibfnamefont {H.}~\bibnamefont {Bomb{\'\i}n}},\ }\bibfield  {title} {\bibinfo {title} {{Gauge color codes: optimal transversal gates and gauge fixing in topological stabilizer codes}},\ }\href@noop {} {\bibfield  {journal} {\bibinfo  {journal} {New Journal of Physics}\ }\textbf {\bibinfo {volume} {17}},\ \bibinfo {pages} {083002} (\bibinfo {year} {2015})}\BibitemShut {NoStop}%
\bibitem [{\citenamefont {Bomb{\'\i}n}(2016)}]{bombin2016dimensional}%
  \BibitemOpen
  \bibfield  {author} {\bibinfo {author} {\bibfnamefont {H.}~\bibnamefont {Bomb{\'\i}n}},\ }\bibfield  {title} {\bibinfo {title} {{Dimensional jump in quantum error correction}},\ }\href@noop {} {\bibfield  {journal} {\bibinfo  {journal} {New Journal of Physics}\ }\textbf {\bibinfo {volume} {18}},\ \bibinfo {pages} {043038} (\bibinfo {year} {2016})}\BibitemShut {NoStop}%
\bibitem [{\citenamefont {Vasmer}\ and\ \citenamefont {Browne}(2019)}]{Vasmer2019three}%
  \BibitemOpen
  \bibfield  {author} {\bibinfo {author} {\bibfnamefont {M.}~\bibnamefont {Vasmer}}\ and\ \bibinfo {author} {\bibfnamefont {D.~E.}\ \bibnamefont {Browne}},\ }\bibfield  {title} {\bibinfo {title} {{Three-dimensional surface codes: Transversal gates and fault-tolerant architectures}},\ }\href {https://doi.org/10.1103/physreva.100.012312} {\bibfield  {journal} {\bibinfo  {journal} {Physical Review A}\ }\textbf {\bibinfo {volume} {100}},\ \bibinfo {pages} {012312} (\bibinfo {year} {2019})}\BibitemShut {NoStop}%
\bibitem [{\citenamefont {Bombin}\ and\ \citenamefont {Martin-Delgado}(2007)}]{Bombin2007exact}%
  \BibitemOpen
  \bibfield  {author} {\bibinfo {author} {\bibfnamefont {H.}~\bibnamefont {Bombin}}\ and\ \bibinfo {author} {\bibfnamefont {M.~A.}\ \bibnamefont {Martin-Delgado}},\ }\bibfield  {title} {\bibinfo {title} {{Exact topological quantum order in {$D = 3$} and beyond: Branyons and brane-net condensates}},\ }\bibfield  {journal} {\bibinfo  {journal} {Physical Review B}\ }\textbf {\bibinfo {volume} {75}},\ \href {https://doi.org/10.1103/physrevb.75.075103} {10.1103/physrevb.75.075103} (\bibinfo {year} {2007})\BibitemShut {NoStop}%
\bibitem [{\citenamefont {Bombin}(2018)}]{Bombin20182D}%
  \BibitemOpen
  \bibfield  {author} {\bibinfo {author} {\bibfnamefont {H.}~\bibnamefont {Bombin}},\ }\bibfield  {title} {\bibinfo {title} {{2D quantum computation with 3D topological codes}},\ }\href {http://arxiv.org/abs/1810.09571} {\bibfield  {journal} {\bibinfo  {journal} {ArXiv preprint}\ } (\bibinfo {year} {2018})},\ \Eprint {https://arxiv.org/abs/1810.09571} {arXiv:1810.09571} \BibitemShut {NoStop}%
\bibitem [{\citenamefont {Brown}(2020)}]{Brown2020FTNC}%
  \BibitemOpen
  \bibfield  {author} {\bibinfo {author} {\bibfnamefont {B.~J.}\ \bibnamefont {Brown}},\ }\bibfield  {title} {\bibinfo {title} {{A fault-tolerant non-clifford gate for the surface code in two dimensions}},\ }\bibfield  {journal} {\bibinfo  {journal} {Science Advances}\ }\textbf {\bibinfo {volume} {6}},\ \href {https://doi.org/10.1126/SCIADV.AAY4929} {10.1126/SCIADV.AAY4929} (\bibinfo {year} {2020}),\ \Eprint {https://arxiv.org/abs/1903.11634} {arXiv:1903.11634} \BibitemShut {NoStop}%
\bibitem [{\citenamefont {Breuckmann}\ \emph {et~al.}(2024)\citenamefont {Breuckmann}, \citenamefont {Davydova}, \citenamefont {Eberhardt},\ and\ \citenamefont {Tantivasadakarn}}]{breuckmann2024cups}%
  \BibitemOpen
  \bibfield  {author} {\bibinfo {author} {\bibfnamefont {N.~P.}\ \bibnamefont {Breuckmann}}, \bibinfo {author} {\bibfnamefont {M.}~\bibnamefont {Davydova}}, \bibinfo {author} {\bibfnamefont {J.~N.}\ \bibnamefont {Eberhardt}},\ and\ \bibinfo {author} {\bibfnamefont {N.}~\bibnamefont {Tantivasadakarn}},\ }\bibfield  {title} {\bibinfo {title} {{Cups and Gates I: Cohomology invariants and logical quantum operations}},\ }\href@noop {} {\bibfield  {journal} {\bibinfo  {journal} {arXiv preprint arXiv:2410.16250}\ } (\bibinfo {year} {2024})}\BibitemShut {NoStop}%
\bibitem [{\citenamefont {Golowich}\ and\ \citenamefont {Lin}(2024)}]{golowich2024quantum}%
  \BibitemOpen
  \bibfield  {author} {\bibinfo {author} {\bibfnamefont {L.}~\bibnamefont {Golowich}}\ and\ \bibinfo {author} {\bibfnamefont {T.-C.}\ \bibnamefont {Lin}},\ }\bibfield  {title} {\bibinfo {title} {{Quantum {LDPC} Codes with Transversal Non-Clifford Gates via Products of Algebraic Codes}},\ }\href@noop {} {\bibfield  {journal} {\bibinfo  {journal} {arXiv preprint arXiv:2410.14662}\ } (\bibinfo {year} {2024})}\BibitemShut {NoStop}%
\bibitem [{\citenamefont {Scruby}\ \emph {et~al.}(2024)\citenamefont {Scruby}, \citenamefont {Pesah},\ and\ \citenamefont {Webster}}]{scruby2024quantum}%
  \BibitemOpen
  \bibfield  {author} {\bibinfo {author} {\bibfnamefont {T.~R.}\ \bibnamefont {Scruby}}, \bibinfo {author} {\bibfnamefont {A.}~\bibnamefont {Pesah}},\ and\ \bibinfo {author} {\bibfnamefont {M.}~\bibnamefont {Webster}},\ }\bibfield  {title} {\bibinfo {title} {{Quantum rainbow codes}},\ }\href@noop {} {\bibfield  {journal} {\bibinfo  {journal} {arXiv preprint arXiv:2408.13130}\ } (\bibinfo {year} {2024})}\BibitemShut {NoStop}%
\bibitem [{\citenamefont {Hsin}\ \emph {et~al.}(2024)\citenamefont {Hsin}, \citenamefont {Kobayashi},\ and\ \citenamefont {Zhu}}]{Hsin2024}%
  \BibitemOpen
  \bibfield  {author} {\bibinfo {author} {\bibfnamefont {P.-S.}\ \bibnamefont {Hsin}}, \bibinfo {author} {\bibfnamefont {R.}~\bibnamefont {Kobayashi}},\ and\ \bibinfo {author} {\bibfnamefont {G.}~\bibnamefont {Zhu}},\ }\bibfield  {title} {\bibinfo {title} {{Classifying Logical Gates in Quantum Codes via Cohomology Operations and Symmetry}},\ }\href {http://arxiv.org/abs/2411.15848} {\bibfield  {journal} {\bibinfo  {journal} {ArXiv preprint}\ } (\bibinfo {year} {2024})},\ \Eprint {https://arxiv.org/abs/2411.15848} {arXiv:2411.15848} \BibitemShut {NoStop}%
\bibitem [{\citenamefont {Zhu}(2025{\natexlab{a}})}]{Zhu2025b}%
  \BibitemOpen
  \bibfield  {author} {\bibinfo {author} {\bibfnamefont {G.}~\bibnamefont {Zhu}},\ }\bibfield  {title} {\bibinfo {title} {{Transversal non-Clifford gates on qLDPC codes breaking the $\sqrt{N}$ distance barrier and quantum-inspired geometry with $\mathbb{Z}_2$ systolic freedom}},\ }\href {http://arxiv.org/abs/2507.15056} {\bibfield  {journal} {\bibinfo  {journal} {ArXiv preprint}\ } (\bibinfo {year} {2025}{\natexlab{a}})},\ \Eprint {https://arxiv.org/abs/2507.15056} {arXiv:2507.15056} \BibitemShut {NoStop}%
\bibitem [{\citenamefont {Zhu}\ \emph {et~al.}(2023{\natexlab{a}})\citenamefont {Zhu}, \citenamefont {Sikander}, \citenamefont {Portnoy}, \citenamefont {Cross},\ and\ \citenamefont {Brown}}]{zhu2023gates}%
  \BibitemOpen
  \bibfield  {author} {\bibinfo {author} {\bibfnamefont {G.}~\bibnamefont {Zhu}}, \bibinfo {author} {\bibfnamefont {S.}~\bibnamefont {Sikander}}, \bibinfo {author} {\bibfnamefont {E.}~\bibnamefont {Portnoy}}, \bibinfo {author} {\bibfnamefont {A.~W.}\ \bibnamefont {Cross}},\ and\ \bibinfo {author} {\bibfnamefont {B.~J.}\ \bibnamefont {Brown}},\ }\bibfield  {title} {\bibinfo {title} {{Non-Clifford and parallelizable fault-tolerant logical gates on constant and almost-constant rate homological quantum {LDPC} codes via higher symmetries}},\ }\href@noop {} {\bibfield  {journal} {\bibinfo  {journal} {Arxiv preprint}\ } (\bibinfo {year} {2023}{\natexlab{a}})}\BibitemShut {NoStop}%
\bibitem [{\citenamefont {Zhu}(2025{\natexlab{b}})}]{zhu2025topological}%
  \BibitemOpen
  \bibfield  {author} {\bibinfo {author} {\bibfnamefont {G.}~\bibnamefont {Zhu}},\ }\bibfield  {title} {\bibinfo {title} {{A topological theory for qLDPC: non-Clifford gates and magic state fountain on homological product codes with constant rate and beyond the $ N^{1/3} $ distance barrier}},\ }\href@noop {} {\bibfield  {journal} {\bibinfo  {journal} {arXiv preprint arXiv:2501.19375}\ } (\bibinfo {year} {2025}{\natexlab{b}})}\BibitemShut {NoStop}%
\bibitem [{\citenamefont {Gulshen}\ and\ \citenamefont {Kaufman}(2025)}]{Gulshen2025}%
  \BibitemOpen
  \bibfield  {author} {\bibinfo {author} {\bibfnamefont {K.}~\bibnamefont {Gulshen}}\ and\ \bibinfo {author} {\bibfnamefont {T.}~\bibnamefont {Kaufman}},\ }\bibfield  {title} {\bibinfo {title} {{Quantum Tanner Color Codes on Qubits with Transversal Gates}},\ }\href {https://arxiv.org/pdf/2510.07864} {\bibfield  {journal} {\bibinfo  {journal} {arXiv:2510.07864}\ } (\bibinfo {year} {2025})},\ \Eprint {https://arxiv.org/abs/2510.07864} {arXiv:2510.07864} \BibitemShut {NoStop}%
\bibitem [{\citenamefont {Golowich}\ and\ \citenamefont {Guruswami}(2025)}]{Golowich2025}%
  \BibitemOpen
  \bibfield  {author} {\bibinfo {author} {\bibfnamefont {L.}~\bibnamefont {Golowich}}\ and\ \bibinfo {author} {\bibfnamefont {V.}~\bibnamefont {Guruswami}},\ }\bibfield  {title} {\bibinfo {title} {{Near-Asymptotically-Good Quantum Codes with Transversal CCZ Gates and Sublinear-Weight Parity-Checks}},\ }\href {https://arxiv.org/pdf/2510.06798} {\bibfield  {journal} {\bibinfo  {journal} {ArXiv preprint}\ } (\bibinfo {year} {2025})},\ \Eprint {https://arxiv.org/abs/2510.06798} {arXiv:2510.06798} \BibitemShut {NoStop}%
\bibitem [{\citenamefont {He}\ \emph {et~al.}(2025{\natexlab{b}})\citenamefont {He}, \citenamefont {Vaikuntanathan}, \citenamefont {Wills},\ and\ \citenamefont {Zhang}}]{He2025}%
  \BibitemOpen
  \bibfield  {author} {\bibinfo {author} {\bibfnamefont {Z.}~\bibnamefont {He}}, \bibinfo {author} {\bibfnamefont {V.}~\bibnamefont {Vaikuntanathan}}, \bibinfo {author} {\bibfnamefont {A.}~\bibnamefont {Wills}},\ and\ \bibinfo {author} {\bibfnamefont {R.~Y.}\ \bibnamefont {Zhang}},\ }\bibfield  {title} {\bibinfo {title} {{Asymptotically Good Quantum Codes with Addressable and Transversal Non-Clifford Gates}},\ }\href {https://arxiv.org/pdf/2507.05392} {\bibfield  {journal} {\bibinfo  {journal} {ArXiv preprint}\ } (\bibinfo {year} {2025}{\natexlab{b}})},\ \Eprint {https://arxiv.org/abs/2507.05392} {arXiv:2507.05392} \BibitemShut {NoStop}%
\bibitem [{\citenamefont {Golowich}\ \emph {et~al.}(2025)\citenamefont {Golowich}, \citenamefont {Chang},\ and\ \citenamefont {Zhu}}]{golowich2025constant}%
  \BibitemOpen
  \bibfield  {author} {\bibinfo {author} {\bibfnamefont {L.}~\bibnamefont {Golowich}}, \bibinfo {author} {\bibfnamefont {K.}~\bibnamefont {Chang}},\ and\ \bibinfo {author} {\bibfnamefont {G.}~\bibnamefont {Zhu}},\ }\bibfield  {title} {\bibinfo {title} {Constant-overhead addressable gates via single-shot code switching},\ }\href@noop {} {\bibfield  {journal} {\bibinfo  {journal} {arXiv preprint arXiv:2510.06760}\ } (\bibinfo {year} {2025})}\BibitemShut {NoStop}%
\bibitem [{\citenamefont {Tan}\ \emph {et~al.}(2025)\citenamefont {Tan}, \citenamefont {Hong}, \citenamefont {Lin}, \citenamefont {Gullans},\ and\ \citenamefont {Hsieh}}]{Tan2025Single}%
  \BibitemOpen
  \bibfield  {author} {\bibinfo {author} {\bibfnamefont {S.~J.~S.}\ \bibnamefont {Tan}}, \bibinfo {author} {\bibfnamefont {Y.}~\bibnamefont {Hong}}, \bibinfo {author} {\bibfnamefont {T.-C.}\ \bibnamefont {Lin}}, \bibinfo {author} {\bibfnamefont {M.~J.}\ \bibnamefont {Gullans}},\ and\ \bibinfo {author} {\bibfnamefont {M.-H.}\ \bibnamefont {Hsieh}},\ }\bibfield  {title} {\bibinfo {title} {{Single-Shot Universality in Quantum LDPC Codes via Code-Switching}},\ }\href {https://arxiv.org/pdf/2510.08552} {\bibfield  {journal} {\bibinfo  {journal} {arXiv:2510.08552}\ } (\bibinfo {year} {2025})},\ \Eprint {https://arxiv.org/abs/2510.08552} {arXiv:2510.08552} \BibitemShut {NoStop}%
\bibitem [{\citenamefont {Jacob}\ \emph {et~al.}(2025)\citenamefont {Jacob}, \citenamefont {McLauchlan},\ and\ \citenamefont {Browne}}]{Jacob2025}%
  \BibitemOpen
  \bibfield  {author} {\bibinfo {author} {\bibfnamefont {A.}~\bibnamefont {Jacob}}, \bibinfo {author} {\bibfnamefont {C.}~\bibnamefont {McLauchlan}},\ and\ \bibinfo {author} {\bibfnamefont {D.~E.}\ \bibnamefont {Browne}},\ }\bibfield  {title} {\bibinfo {title} {{Single-Shot Decoding and Fault-tolerant Gates with Trivariate Tricycle Codes}},\ }\href {http://arxiv.org/abs/2508.08191} {\bibfield  {journal} {\bibinfo  {journal} {ArXiv preprint}\ } (\bibinfo {year} {2025})},\ \Eprint {https://arxiv.org/abs/2508.08191} {arXiv:2508.08191} \BibitemShut {NoStop}%
\bibitem [{\citenamefont {Menon}\ \emph {et~al.}(2025)\citenamefont {Menon}, \citenamefont {Bonilla-Ataides}, \citenamefont {Mehta}, \citenamefont {Gu}, \citenamefont {Tan},\ and\ \citenamefont {Lukin}}]{Menon2025}%
  \BibitemOpen
  \bibfield  {author} {\bibinfo {author} {\bibfnamefont {V.}~\bibnamefont {Menon}}, \bibinfo {author} {\bibfnamefont {J.~P.}\ \bibnamefont {Bonilla-Ataides}}, \bibinfo {author} {\bibfnamefont {R.}~\bibnamefont {Mehta}}, \bibinfo {author} {\bibfnamefont {A.}~\bibnamefont {Gu}}, \bibinfo {author} {\bibfnamefont {D.~B.}\ \bibnamefont {Tan}},\ and\ \bibinfo {author} {\bibfnamefont {M.~D.}\ \bibnamefont {Lukin}},\ }\bibfield  {title} {\bibinfo {title} {{Magic tricycles: Efficient magic state generation with finite block-length quantum LDPC codes}},\ }\href {http://arxiv.org/abs/2508.10714} {\bibfield  {journal} {\bibinfo  {journal} {ArXiv preprint}\ } (\bibinfo {year} {2025})},\ \Eprint {https://arxiv.org/abs/2508.10714} {arXiv:2508.10714} \BibitemShut {NoStop}%
\bibitem [{\citenamefont {Kobayashi}\ \emph {et~al.}(2025)\citenamefont {Kobayashi}, \citenamefont {Zhu},\ and\ \citenamefont {Hsin}}]{Kobayashi2025}%
  \BibitemOpen
  \bibfield  {author} {\bibinfo {author} {\bibfnamefont {R.}~\bibnamefont {Kobayashi}}, \bibinfo {author} {\bibfnamefont {G.}~\bibnamefont {Zhu}},\ and\ \bibinfo {author} {\bibfnamefont {P.-S.}\ \bibnamefont {Hsin}},\ }\bibfield  {title} {\bibinfo {title} {{Clifford Hierarchy Stabilizer Codes: Transversal Non-Clifford Gates and Magic}},\ }\href {http://arxiv.org/abs/2511.02900} {\bibfield  {journal} {\bibinfo  {journal} {ArXiv preprint}\ } (\bibinfo {year} {2025})},\ \Eprint {https://arxiv.org/abs/2511.02900} {arXiv:2511.02900} \BibitemShut {NoStop}%
\bibitem [{\citenamefont {Bravyi}\ and\ \citenamefont {Kitaev}(2005)}]{bravyi2005magic}%
  \BibitemOpen
  \bibfield  {author} {\bibinfo {author} {\bibfnamefont {S.}~\bibnamefont {Bravyi}}\ and\ \bibinfo {author} {\bibfnamefont {A.}~\bibnamefont {Kitaev}},\ }\bibfield  {title} {\bibinfo {title} {{Universal quantum computation with ideal Clifford gates and noisy ancillas}},\ }\href {https://doi.org/10.1103/PhysRevA.71.022316} {\bibfield  {journal} {\bibinfo  {journal} {Phys. Rev. A}\ }\textbf {\bibinfo {volume} {71}},\ \bibinfo {pages} {022316} (\bibinfo {year} {2005})}\BibitemShut {NoStop}%
\bibitem [{\citenamefont {Meier}\ \emph {et~al.}(2013)\citenamefont {Meier}, \citenamefont {Eastin},\ and\ \citenamefont {Knill}}]{meier2012magic}%
  \BibitemOpen
  \bibfield  {author} {\bibinfo {author} {\bibfnamefont {A.~M.}\ \bibnamefont {Meier}}, \bibinfo {author} {\bibfnamefont {B.}~\bibnamefont {Eastin}},\ and\ \bibinfo {author} {\bibfnamefont {E.}~\bibnamefont {Knill}},\ }\bibfield  {title} {\bibinfo {title} {{Magic-state distillation with the four-qubit code}},\ }\href {https://www.rintonpress.com/journals/doi/QIC13.3-4-2.html} {\bibfield  {journal} {\bibinfo  {journal} {Quantum Information and Computation}\ }\textbf {\bibinfo {volume} {13}},\ \bibinfo {pages} {195–209} (\bibinfo {year} {2013})}\BibitemShut {NoStop}%
\bibitem [{\citenamefont {Bravyi}\ and\ \citenamefont {Haah}(2012)}]{Bravyi2012magic}%
  \BibitemOpen
  \bibfield  {author} {\bibinfo {author} {\bibfnamefont {S.}~\bibnamefont {Bravyi}}\ and\ \bibinfo {author} {\bibfnamefont {J.}~\bibnamefont {Haah}},\ }\bibfield  {title} {\bibinfo {title} {{Magic-state distillation with low overhead}},\ }\href {https://doi.org/10.1103/physreva.86.052329} {\bibfield  {journal} {\bibinfo  {journal} {Physical Review A}\ }\textbf {\bibinfo {volume} {86}},\ \bibinfo {pages} {052329} (\bibinfo {year} {2012})}\BibitemShut {NoStop}%
\bibitem [{\citenamefont {Campbell}\ \emph {et~al.}(2012)\citenamefont {Campbell}, \citenamefont {Anwar},\ and\ \citenamefont {Browne}}]{campbell2012magic}%
  \BibitemOpen
  \bibfield  {author} {\bibinfo {author} {\bibfnamefont {E.~T.}\ \bibnamefont {Campbell}}, \bibinfo {author} {\bibfnamefont {H.}~\bibnamefont {Anwar}},\ and\ \bibinfo {author} {\bibfnamefont {D.~E.}\ \bibnamefont {Browne}},\ }\bibfield  {title} {\bibinfo {title} {{Magic-state distillation in all prime dimensions using quantum reed-muller codes}},\ }\href {https://journals.aps.org/prx/abstract/10.1103/PhysRevX.2.041021} {\bibfield  {journal} {\bibinfo  {journal} {Physical Review X}\ }\textbf {\bibinfo {volume} {2}},\ \bibinfo {pages} {041021} (\bibinfo {year} {2012})}\BibitemShut {NoStop}%
\bibitem [{\citenamefont {Jones}(2013)}]{Jones2013multilevel}%
  \BibitemOpen
  \bibfield  {author} {\bibinfo {author} {\bibfnamefont {C.}~\bibnamefont {Jones}},\ }\bibfield  {title} {\bibinfo {title} {{Multilevel distillation of magic states for quantum computing}},\ }\href {https://doi.org/10.1103/physreva.87.042305} {\bibfield  {journal} {\bibinfo  {journal} {Physical Review A}\ }\textbf {\bibinfo {volume} {87}},\ \bibinfo {pages} {042305} (\bibinfo {year} {2013})}\BibitemShut {NoStop}%
\bibitem [{\citenamefont {Haah}\ \emph {et~al.}(2017)\citenamefont {Haah}, \citenamefont {Hastings}, \citenamefont {Poulin},\ and\ \citenamefont {Wecker}}]{haah2017magic}%
  \BibitemOpen
  \bibfield  {author} {\bibinfo {author} {\bibfnamefont {J.}~\bibnamefont {Haah}}, \bibinfo {author} {\bibfnamefont {M.~B.}\ \bibnamefont {Hastings}}, \bibinfo {author} {\bibfnamefont {D.}~\bibnamefont {Poulin}},\ and\ \bibinfo {author} {\bibfnamefont {D.}~\bibnamefont {Wecker}},\ }\bibfield  {title} {\bibinfo {title} {{Magic state distillation with low space overhead and optimal asymptotic input count}},\ }\href@noop {} {\bibfield  {journal} {\bibinfo  {journal} {Quantum}\ }\textbf {\bibinfo {volume} {1}},\ \bibinfo {pages} {31} (\bibinfo {year} {2017})}\BibitemShut {NoStop}%
\bibitem [{\citenamefont {Krishna}\ and\ \citenamefont {Tillich}(2018)}]{krishna2018magic}%
  \BibitemOpen
  \bibfield  {author} {\bibinfo {author} {\bibfnamefont {A.}~\bibnamefont {Krishna}}\ and\ \bibinfo {author} {\bibfnamefont {J.-P.}\ \bibnamefont {Tillich}},\ }\bibfield  {title} {\bibinfo {title} {{Magic state distillation with punctured polar codes}},\ }\href {https://arxiv.org/abs/1811.03112} {\bibfield  {journal} {\bibinfo  {journal} {arXiv preprint arXiv:1811.03112}\ } (\bibinfo {year} {2018})}\BibitemShut {NoStop}%
\bibitem [{\citenamefont {Krishna}\ and\ \citenamefont {Tillich}(2019)}]{Krishna2019towards}%
  \BibitemOpen
  \bibfield  {author} {\bibinfo {author} {\bibfnamefont {A.}~\bibnamefont {Krishna}}\ and\ \bibinfo {author} {\bibfnamefont {J.-P.}\ \bibnamefont {Tillich}},\ }\bibfield  {title} {\bibinfo {title} {{Towards Low Overhead Magic State Distillation}},\ }\href {https://doi.org/10.1103/physrevlett.123.070507} {\bibfield  {journal} {\bibinfo  {journal} {Physical Review Letters}\ }\textbf {\bibinfo {volume} {123}},\ \bibinfo {pages} {070507} (\bibinfo {year} {2019})}\BibitemShut {NoStop}%
\bibitem [{\citenamefont {Litinski}(2019)}]{litinski2019magic}%
  \BibitemOpen
  \bibfield  {author} {\bibinfo {author} {\bibfnamefont {D.}~\bibnamefont {Litinski}},\ }\bibfield  {title} {\bibinfo {title} {{Magic state distillation: Not as costly as you think}},\ }\href {https://quantum-journal.org/papers/q-2019-12-02-205/} {\bibfield  {journal} {\bibinfo  {journal} {Quantum}\ }\textbf {\bibinfo {volume} {3}},\ \bibinfo {pages} {205} (\bibinfo {year} {2019})}\BibitemShut {NoStop}%
\bibitem [{\citenamefont {Rodriguez}\ \emph {et~al.}(2024)\citenamefont {Rodriguez}, \citenamefont {Robinson}, \citenamefont {Jepsen}, \citenamefont {He}, \citenamefont {Duckering}, \citenamefont {Zhao}, \citenamefont {Wu}, \citenamefont {Campo}, \citenamefont {Bagnall}, \citenamefont {Kwon} \emph {et~al.}}]{rodriguez2024experimental}%
  \BibitemOpen
  \bibfield  {author} {\bibinfo {author} {\bibfnamefont {P.~S.}\ \bibnamefont {Rodriguez}}, \bibinfo {author} {\bibfnamefont {J.~M.}\ \bibnamefont {Robinson}}, \bibinfo {author} {\bibfnamefont {P.~N.}\ \bibnamefont {Jepsen}}, \bibinfo {author} {\bibfnamefont {Z.}~\bibnamefont {He}}, \bibinfo {author} {\bibfnamefont {C.}~\bibnamefont {Duckering}}, \bibinfo {author} {\bibfnamefont {C.}~\bibnamefont {Zhao}}, \bibinfo {author} {\bibfnamefont {K.-H.}\ \bibnamefont {Wu}}, \bibinfo {author} {\bibfnamefont {J.}~\bibnamefont {Campo}}, \bibinfo {author} {\bibfnamefont {K.}~\bibnamefont {Bagnall}}, \bibinfo {author} {\bibfnamefont {M.}~\bibnamefont {Kwon}}, \emph {et~al.},\ }\bibfield  {title} {\bibinfo {title} {{Experimental Demonstration of Logical Magic State Distillation}},\ }\href@noop {} {\bibfield  {journal} {\bibinfo  {journal} {arXiv preprint arXiv:2412.15165}\ } (\bibinfo {year} {2024})}\BibitemShut {NoStop}%
\bibitem [{\citenamefont {Wills}\ \emph {et~al.}(2024)\citenamefont {Wills}, \citenamefont {Hsieh},\ and\ \citenamefont {Yamasaki}}]{wills2024constant}%
  \BibitemOpen
  \bibfield  {author} {\bibinfo {author} {\bibfnamefont {A.}~\bibnamefont {Wills}}, \bibinfo {author} {\bibfnamefont {M.-H.}\ \bibnamefont {Hsieh}},\ and\ \bibinfo {author} {\bibfnamefont {H.}~\bibnamefont {Yamasaki}},\ }\bibfield  {title} {\bibinfo {title} {{Constant-overhead magic state distillation}},\ }\href@noop {} {\bibfield  {journal} {\bibinfo  {journal} {arXiv preprint arXiv:2408.07764}\ } (\bibinfo {year} {2024})}\BibitemShut {NoStop}%
\bibitem [{\citenamefont {Nguyen}(2024)}]{Nguyen2024}%
  \BibitemOpen
  \bibfield  {author} {\bibinfo {author} {\bibfnamefont {Q.~T.}\ \bibnamefont {Nguyen}},\ }\bibfield  {title} {\bibinfo {title} {{Good binary quantum codes with transversal CCZ gate}},\ }\href {https://doi.org/10.1145/3717823.3718186} {\bibfield  {journal} {\bibinfo  {journal} {Proceedings of the Annual ACM Symposium on Theory of Computing}\ ,\ \bibinfo {pages} {697}} (\bibinfo {year} {2024})},\ \Eprint {https://arxiv.org/abs/2408.10140} {arXiv:2408.10140} \BibitemShut {NoStop}%
\bibitem [{\citenamefont {Li}(2015)}]{li2015magic}%
  \BibitemOpen
  \bibfield  {author} {\bibinfo {author} {\bibfnamefont {Y.}~\bibnamefont {Li}},\ }\bibfield  {title} {\bibinfo {title} {{A magic state’s fidelity can be superior to the operations that created it}},\ }\href {https://doi.org/10.1088/1367-2630/17/2/023037} {\bibfield  {journal} {\bibinfo  {journal} {New Journal of Physics}\ }\textbf {\bibinfo {volume} {17}},\ \bibinfo {pages} {023037} (\bibinfo {year} {2015})}\BibitemShut {NoStop}%
\bibitem [{\citenamefont {Chamberland}\ and\ \citenamefont {Cross}(2019)}]{Chamberland2019faulttolerantmagic}%
  \BibitemOpen
  \bibfield  {author} {\bibinfo {author} {\bibfnamefont {C.}~\bibnamefont {Chamberland}}\ and\ \bibinfo {author} {\bibfnamefont {A.~W.}\ \bibnamefont {Cross}},\ }\bibfield  {title} {\bibinfo {title} {{Fault-tolerant magic state preparation with flag qubits}},\ }\href {https://doi.org/10.22331/q-2019-05-20-143} {\bibfield  {journal} {\bibinfo  {journal} {{Quantum}}\ }\textbf {\bibinfo {volume} {3}},\ \bibinfo {pages} {143} (\bibinfo {year} {2019})}\BibitemShut {NoStop}%
\bibitem [{\citenamefont {Chamberland}\ and\ \citenamefont {Noh}(2020)}]{Chamberland2020verylow}%
  \BibitemOpen
  \bibfield  {author} {\bibinfo {author} {\bibfnamefont {C.}~\bibnamefont {Chamberland}}\ and\ \bibinfo {author} {\bibfnamefont {K.}~\bibnamefont {Noh}},\ }\bibfield  {title} {\bibinfo {title} {{Very low overhead fault-tolerant magic state preparation using redundant ancilla encoding and flag qubits}},\ }\bibfield  {journal} {\bibinfo  {journal} {npj Quantum Information}\ }\textbf {\bibinfo {volume} {6}},\ \href {https://doi.org/10.1038/s41534-020-00319-5} {10.1038/s41534-020-00319-5} (\bibinfo {year} {2020})\BibitemShut {NoStop}%
\bibitem [{\citenamefont {Gidney}\ \emph {et~al.}(2024)\citenamefont {Gidney}, \citenamefont {Shutty},\ and\ \citenamefont {Jones}}]{gidney2024magic}%
  \BibitemOpen
  \bibfield  {author} {\bibinfo {author} {\bibfnamefont {C.}~\bibnamefont {Gidney}}, \bibinfo {author} {\bibfnamefont {N.}~\bibnamefont {Shutty}},\ and\ \bibinfo {author} {\bibfnamefont {C.}~\bibnamefont {Jones}},\ }\bibfield  {title} {\bibinfo {title} {{Magic state cultivation: growing T states as cheap as CNOT gates}},\ }\href@noop {} {\bibfield  {journal} {\bibinfo  {journal} {arXiv preprint arXiv:2409.17595}\ } (\bibinfo {year} {2024})}\BibitemShut {NoStop}%
\bibitem [{\citenamefont {Gupta}\ \emph {et~al.}(2024)\citenamefont {Gupta}, \citenamefont {Sundaresan}, \citenamefont {Alexander}, \citenamefont {Wood}, \citenamefont {Merkel}, \citenamefont {Healy}, \citenamefont {Hillenbrand}, \citenamefont {Jochym-O’Connor}, \citenamefont {Wootton}, \citenamefont {Yoder} \emph {et~al.}}]{gupta2024encoding}%
  \BibitemOpen
  \bibfield  {author} {\bibinfo {author} {\bibfnamefont {R.~S.}\ \bibnamefont {Gupta}}, \bibinfo {author} {\bibfnamefont {N.}~\bibnamefont {Sundaresan}}, \bibinfo {author} {\bibfnamefont {T.}~\bibnamefont {Alexander}}, \bibinfo {author} {\bibfnamefont {C.~J.}\ \bibnamefont {Wood}}, \bibinfo {author} {\bibfnamefont {S.~T.}\ \bibnamefont {Merkel}}, \bibinfo {author} {\bibfnamefont {M.~B.}\ \bibnamefont {Healy}}, \bibinfo {author} {\bibfnamefont {M.}~\bibnamefont {Hillenbrand}}, \bibinfo {author} {\bibfnamefont {T.}~\bibnamefont {Jochym-O’Connor}}, \bibinfo {author} {\bibfnamefont {J.~R.}\ \bibnamefont {Wootton}}, \bibinfo {author} {\bibfnamefont {T.~J.}\ \bibnamefont {Yoder}}, \emph {et~al.},\ }\bibfield  {title} {\bibinfo {title} {{Encoding a magic state with beyond break-even fidelity}},\ }\href {https://www.nature.com/articles/s41586-023-06846-3} {\bibfield  {journal} {\bibinfo  {journal} {Nature}\ }\textbf {\bibinfo {volume} {625}},\ \bibinfo {pages} {259} (\bibinfo {year} {2024})}\BibitemShut {NoStop}%
\bibitem [{\citenamefont {Hirano}\ \emph {et~al.}(2024)\citenamefont {Hirano}, \citenamefont {Itogawa},\ and\ \citenamefont {Fujii}}]{Hirano2024zerolevel}%
  \BibitemOpen
  \bibfield  {author} {\bibinfo {author} {\bibfnamefont {Y.}~\bibnamefont {Hirano}}, \bibinfo {author} {\bibfnamefont {T.}~\bibnamefont {Itogawa}},\ and\ \bibinfo {author} {\bibfnamefont {K.}~\bibnamefont {Fujii}},\ }\href {https://doi.org/10.48550/ARXIV.2404.09740} {\bibinfo {title} {{Leveraging Zero-Level Distillation to Generate High-Fidelity Magic States}}} (\bibinfo {year} {2024})\BibitemShut {NoStop}%
\bibitem [{\citenamefont {Itogawa}\ \emph {et~al.}(2025)\citenamefont {Itogawa}, \citenamefont {Takada}, \citenamefont {Hirano},\ and\ \citenamefont {Fujii}}]{Itogawa2025}%
  \BibitemOpen
  \bibfield  {author} {\bibinfo {author} {\bibfnamefont {T.}~\bibnamefont {Itogawa}}, \bibinfo {author} {\bibfnamefont {Y.}~\bibnamefont {Takada}}, \bibinfo {author} {\bibfnamefont {Y.}~\bibnamefont {Hirano}},\ and\ \bibinfo {author} {\bibfnamefont {K.}~\bibnamefont {Fujii}},\ }\bibfield  {title} {\bibinfo {title} {{Efficient Magic State Distillation by Zero-Level Distillation}},\ }\bibfield  {journal} {\bibinfo  {journal} {PRX Quantum}\ }\textbf {\bibinfo {volume} {6}},\ \href {https://doi.org/10.1103/thxx-njr6} {10.1103/thxx-njr6} (\bibinfo {year} {2025}),\ \Eprint {https://arxiv.org/abs/2403.03991v2} {arXiv:2403.03991v2} \BibitemShut {NoStop}%
\bibitem [{\citenamefont {Chen}\ \emph {et~al.}(2025)\citenamefont {Chen}, \citenamefont {Chen}, \citenamefont {Lu},\ and\ \citenamefont {Pan}}]{Chen2025}%
  \BibitemOpen
  \bibfield  {author} {\bibinfo {author} {\bibfnamefont {Z.-H.}\ \bibnamefont {Chen}}, \bibinfo {author} {\bibfnamefont {M.-C.}\ \bibnamefont {Chen}}, \bibinfo {author} {\bibfnamefont {C.-Y.}\ \bibnamefont {Lu}},\ and\ \bibinfo {author} {\bibfnamefont {J.-W.}\ \bibnamefont {Pan}},\ }\bibfield  {title} {\bibinfo {title} {{Efficient Magic State Cultivation on $\mathbb{RP}^2$}},\ }\href {http://arxiv.org/abs/2503.18657} {\bibfield  {journal} {\bibinfo  {journal} {Arxiv preprint}\ } (\bibinfo {year} {2025})},\ \Eprint {https://arxiv.org/abs/2503.18657} {arXiv:2503.18657} \BibitemShut {NoStop}%
\bibitem [{\citenamefont {Vaknin}\ \emph {et~al.}(2025)\citenamefont {Vaknin}, \citenamefont {Jacoby}, \citenamefont {Grimsmo},\ and\ \citenamefont {Retzker}}]{Vaknin2025}%
  \BibitemOpen
  \bibfield  {author} {\bibinfo {author} {\bibfnamefont {Y.}~\bibnamefont {Vaknin}}, \bibinfo {author} {\bibfnamefont {S.}~\bibnamefont {Jacoby}}, \bibinfo {author} {\bibfnamefont {A.}~\bibnamefont {Grimsmo}},\ and\ \bibinfo {author} {\bibfnamefont {A.}~\bibnamefont {Retzker}},\ }\bibfield  {title} {\bibinfo {title} {{Efficient Magic State Cultivation on the Surface Code}},\ }\href {http://arxiv.org/abs/2502.01743} {\bibfield  {journal} {\bibinfo  {journal} {Arxiv preprint}\ } (\bibinfo {year} {2025})},\ \Eprint {https://arxiv.org/abs/2502.01743} {arXiv:2502.01743} \BibitemShut {NoStop}%
\bibitem [{\citenamefont {Sahay}\ \emph {et~al.}(2025)\citenamefont {Sahay}, \citenamefont {Tsai}, \citenamefont {Chang}, \citenamefont {Su}, \citenamefont {Smith}, \citenamefont {Singh},\ and\ \citenamefont {Puri}}]{Sahay2025}%
  \BibitemOpen
  \bibfield  {author} {\bibinfo {author} {\bibfnamefont {K.}~\bibnamefont {Sahay}}, \bibinfo {author} {\bibfnamefont {P.-K.}\ \bibnamefont {Tsai}}, \bibinfo {author} {\bibfnamefont {K.}~\bibnamefont {Chang}}, \bibinfo {author} {\bibfnamefont {Q.}~\bibnamefont {Su}}, \bibinfo {author} {\bibfnamefont {T.~B.}\ \bibnamefont {Smith}}, \bibinfo {author} {\bibfnamefont {S.}~\bibnamefont {Singh}},\ and\ \bibinfo {author} {\bibfnamefont {S.}~\bibnamefont {Puri}},\ }\bibfield  {title} {\bibinfo {title} {{Fold-transversal surface code cultivation}},\ }\href {http://arxiv.org/abs/2509.05212} {\bibfield  {journal} {\bibinfo  {journal} {Arxiv preprint}\ } (\bibinfo {year} {2025})},\ \Eprint {https://arxiv.org/abs/2509.05212} {arXiv:2509.05212} \BibitemShut {NoStop}%
\bibitem [{\citenamefont {Claes}(2025)}]{Claes2025}%
  \BibitemOpen
  \bibfield  {author} {\bibinfo {author} {\bibfnamefont {J.}~\bibnamefont {Claes}},\ }\bibfield  {title} {\bibinfo {title} {{Cultivating T states on the surface code with only two-qubit gates}},\ }\href {http://arxiv.org/abs/2509.05232} {\bibfield  {journal} {\bibinfo  {journal} {Arxiv preprint}\ } (\bibinfo {year} {2025})},\ \Eprint {https://arxiv.org/abs/2509.05232} {arXiv:2509.05232} \BibitemShut {NoStop}%
\bibitem [{\citenamefont {Davydova}\ \emph {et~al.}(2025)\citenamefont {Davydova}, \citenamefont {Bauer}, \citenamefont {de~la Fuente}, \citenamefont {Webster}, \citenamefont {Williamson},\ and\ \citenamefont {Brown}}]{davydova2025universal}%
  \BibitemOpen
  \bibfield  {author} {\bibinfo {author} {\bibfnamefont {M.}~\bibnamefont {Davydova}}, \bibinfo {author} {\bibfnamefont {A.}~\bibnamefont {Bauer}}, \bibinfo {author} {\bibfnamefont {J.~C.~M.}\ \bibnamefont {de~la Fuente}}, \bibinfo {author} {\bibfnamefont {M.}~\bibnamefont {Webster}}, \bibinfo {author} {\bibfnamefont {D.~J.}\ \bibnamefont {Williamson}},\ and\ \bibinfo {author} {\bibfnamefont {B.~J.}\ \bibnamefont {Brown}},\ }\bibfield  {title} {\bibinfo {title} {Universal fault tolerant quantum computation in 2d without getting tied in knots},\ }\href@noop {} {\bibfield  {journal} {\bibinfo  {journal} {arXiv preprint arxiv:2503.15751}\ } (\bibinfo {year} {2025})}\BibitemShut {NoStop}%
\bibitem [{\citenamefont {Huang}\ and\ \citenamefont {Chen}(2025)}]{Huang2025}%
  \BibitemOpen
  \bibfield  {author} {\bibinfo {author} {\bibfnamefont {S.-J.}\ \bibnamefont {Huang}}\ and\ \bibinfo {author} {\bibfnamefont {Y.}~\bibnamefont {Chen}},\ }\bibfield  {title} {\bibinfo {title} {{Generating logical magic states with the aid of non-Abelian topological order}},\ }\href {http://arxiv.org/abs/2502.00998} {\bibfield  {journal} {\bibinfo  {journal} {Arxiv preprint}\ } (\bibinfo {year} {2025})},\ \Eprint {https://arxiv.org/abs/2502.00998} {arXiv:2502.00998} \BibitemShut {NoStop}%
\bibitem [{\citenamefont {Bauer}\ and\ \citenamefont {de~la Fuente}(2025)}]{Bauer2025}%
  \BibitemOpen
  \bibfield  {author} {\bibinfo {author} {\bibfnamefont {A.}~\bibnamefont {Bauer}}\ and\ \bibinfo {author} {\bibfnamefont {J.~C.~M.}\ \bibnamefont {de~la Fuente}},\ }\bibfield  {title} {\bibinfo {title} {{Planar fault-tolerant circuits for non-Clifford gates on the 2D color code}},\ }\href {http://arxiv.org/abs/2505.05175} {\bibfield  {journal} {\bibinfo  {journal} {Arxiv preprint}\ } (\bibinfo {year} {2025})},\ \Eprint {https://arxiv.org/abs/2505.05175} {arXiv:2505.05175} \BibitemShut {NoStop}%
\bibitem [{\citenamefont {Kapustin}\ and\ \citenamefont {Thorngren}(2017)}]{kapustin2013higher}%
  \BibitemOpen
  \bibfield  {author} {\bibinfo {author} {\bibfnamefont {A.}~\bibnamefont {Kapustin}}\ and\ \bibinfo {author} {\bibfnamefont {R.}~\bibnamefont {Thorngren}},\ }\bibfield  {title} {\bibinfo {title} {{Higher symmetry and gapped phases of Gauge theories}},\ }\href {https://doi.org/10.1007/978-3-319-59939-7_5} {\bibfield  {journal} {\bibinfo  {journal} {Progress in Mathematics}\ }\textbf {\bibinfo {volume} {324}},\ \bibinfo {pages} {177} (\bibinfo {year} {2017})},\ \Eprint {https://arxiv.org/abs/1309.4721} {arXiv:1309.4721} \BibitemShut {NoStop}%
\bibitem [{\citenamefont {Gaiotto}\ \emph {et~al.}(2015)\citenamefont {Gaiotto}, \citenamefont {Kapustin}, \citenamefont {Seiberg},\ and\ \citenamefont {Willett}}]{Gaiotto2015}%
  \BibitemOpen
  \bibfield  {author} {\bibinfo {author} {\bibfnamefont {D.}~\bibnamefont {Gaiotto}}, \bibinfo {author} {\bibfnamefont {A.}~\bibnamefont {Kapustin}}, \bibinfo {author} {\bibfnamefont {N.}~\bibnamefont {Seiberg}},\ and\ \bibinfo {author} {\bibfnamefont {B.}~\bibnamefont {Willett}},\ }\bibfield  {title} {\bibinfo {title} {{Generalized global symmetries}},\ }\href {https://doi.org/10.1007/JHEP02(2015)172} {\bibfield  {journal} {\bibinfo  {journal} {Journal of High Energy Physics}\ }\textbf {\bibinfo {volume} {2015}},\ \bibinfo {pages} {1} (\bibinfo {year} {2015})},\ \Eprint {https://arxiv.org/abs/1412.5148} {arXiv:1412.5148} \BibitemShut {NoStop}%
\bibitem [{\citenamefont {Steane}(1996)}]{steane1996multiple}%
  \BibitemOpen
  \bibfield  {author} {\bibinfo {author} {\bibfnamefont {A.}~\bibnamefont {Steane}},\ }\bibfield  {title} {\bibinfo {title} {{Multiple-particle interference and quantum error correction}},\ }\href {https://royalsocietypublishing.org/doi/10.1098/rspa.1996.0136} {\bibfield  {journal} {\bibinfo  {journal} {Proceedings of the Royal Society of London. Series A: Mathematical, Physical and Engineering Sciences}\ }\textbf {\bibinfo {volume} {452}},\ \bibinfo {pages} {2551} (\bibinfo {year} {1996})}\BibitemShut {NoStop}%
\bibitem [{\citenamefont {Calderbank}\ and\ \citenamefont {Shor}(1996)}]{calderbank1996good}%
  \BibitemOpen
  \bibfield  {author} {\bibinfo {author} {\bibfnamefont {A.~R.}\ \bibnamefont {Calderbank}}\ and\ \bibinfo {author} {\bibfnamefont {P.~W.}\ \bibnamefont {Shor}},\ }\bibfield  {title} {\bibinfo {title} {{Good quantum error-correcting codes exist}},\ }\href {https://journals.aps.org/pra/abstract/10.1103/PhysRevA.54.1098} {\bibfield  {journal} {\bibinfo  {journal} {Physical Review A}\ }\textbf {\bibinfo {volume} {54}},\ \bibinfo {pages} {1098} (\bibinfo {year} {1996})}\BibitemShut {NoStop}%
\bibitem [{\citenamefont {Breuckmann}\ and\ \citenamefont {Eberhardt}(2021{\natexlab{b}})}]{breuckmann2021ldpc}%
  \BibitemOpen
  \bibfield  {author} {\bibinfo {author} {\bibfnamefont {N.~P.}\ \bibnamefont {Breuckmann}}\ and\ \bibinfo {author} {\bibfnamefont {J.~N.}\ \bibnamefont {Eberhardt}},\ }\bibfield  {title} {\bibinfo {title} {{Quantum Low-Density Parity-Check Codes}},\ }\href {https://doi.org/10.1103/PRXQuantum.2.040101} {\bibfield  {journal} {\bibinfo  {journal} {PRX Quantum}\ }\textbf {\bibinfo {volume} {2}},\ \bibinfo {pages} {040101} (\bibinfo {year} {2021}{\natexlab{b}})}\BibitemShut {NoStop}%
\bibitem [{\citenamefont {Rayhaun}\ and\ \citenamefont {Williamson}(2023)}]{Rayhaun2021}%
  \BibitemOpen
  \bibfield  {author} {\bibinfo {author} {\bibfnamefont {B.~C.}\ \bibnamefont {Rayhaun}}\ and\ \bibinfo {author} {\bibfnamefont {D.~J.}\ \bibnamefont {Williamson}},\ }\bibfield  {title} {\bibinfo {title} {{Higher-form subsystem symmetry breaking: Subdimensional criticality and fracton phase transitions}},\ }\bibfield  {journal} {\bibinfo  {journal} {SciPost Physics}\ }\textbf {\bibinfo {volume} {15}},\ \href {https://doi.org/10.21468/SciPostPhys.15.1.017} {10.21468/SciPostPhys.15.1.017} (\bibinfo {year} {2023}),\ \Eprint {https://arxiv.org/abs/2112.12735} {arXiv:2112.12735} \BibitemShut {NoStop}%
\bibitem [{\citenamefont {Williamson}(2016)}]{Williamson2016}%
  \BibitemOpen
  \bibfield  {author} {\bibinfo {author} {\bibfnamefont {D.~J.}\ \bibnamefont {Williamson}},\ }\bibfield  {title} {\bibinfo {title} {{Fractal symmetries: Ungauging the cubic code}},\ }\bibfield  {journal} {\bibinfo  {journal} {Physical Review B}\ }\textbf {\bibinfo {volume} {94}},\ \href {https://doi.org/10.1103/physrevb.94.155128} {10.1103/physrevb.94.155128} (\bibinfo {year} {2016}),\ \Eprint {https://arxiv.org/abs/1603.05182} {arXiv:1603.05182} \BibitemShut {NoStop}%
\bibitem [{\citenamefont {Kubica}\ and\ \citenamefont {Yoshida}(2018)}]{kubica2018ungauging}%
  \BibitemOpen
  \bibfield  {author} {\bibinfo {author} {\bibfnamefont {A.}~\bibnamefont {Kubica}}\ and\ \bibinfo {author} {\bibfnamefont {B.}~\bibnamefont {Yoshida}},\ }\bibfield  {title} {\bibinfo {title} {{Ungauging quantum error-correcting codes}},\ }\href {http://arxiv.org/abs/1805.01836} {\bibfield  {journal} {\bibinfo  {journal} {Arxiv preprint}\ ,\ \bibinfo {pages} {1}} (\bibinfo {year} {2018})},\ \Eprint {https://arxiv.org/abs/1805.01836} {arXiv:1805.01836} \BibitemShut {NoStop}%
\bibitem [{\citenamefont {Rakovszky}\ and\ \citenamefont {Khemani}(2023)}]{Rakovszky2023}%
  \BibitemOpen
  \bibfield  {author} {\bibinfo {author} {\bibfnamefont {T.}~\bibnamefont {Rakovszky}}\ and\ \bibinfo {author} {\bibfnamefont {V.}~\bibnamefont {Khemani}},\ }\bibfield  {title} {\bibinfo {title} {{The Physics of (good) LDPC Codes I. Gauging and dualities}},\ }\href {https://arxiv.org/abs/2310.16032v1} {\bibfield  {journal} {\bibinfo  {journal} {Arxiv preprint}\ } (\bibinfo {year} {2023})},\ \Eprint {https://arxiv.org/abs/2310.16032} {arXiv:2310.16032} \BibitemShut {NoStop}%
\bibitem [{\citenamefont {Zhang}\ and\ \citenamefont {Li}(2024)}]{zhang2024time}%
  \BibitemOpen
  \bibfield  {author} {\bibinfo {author} {\bibfnamefont {G.}~\bibnamefont {Zhang}}\ and\ \bibinfo {author} {\bibfnamefont {Y.}~\bibnamefont {Li}},\ }\bibfield  {title} {\bibinfo {title} {{Time-efficient logical operations on quantum {LDPC} codes}},\ }\href@noop {} {\bibfield  {journal} {\bibinfo  {journal} {arXiv preprint arXiv:2408.01339}\ } (\bibinfo {year} {2024})}\BibitemShut {NoStop}%
\bibitem [{\citenamefont {Hsieh}\ \emph {et~al.}(2025)\citenamefont {Hsieh}, \citenamefont {Li},\ and\ \citenamefont {Lin}}]{Hsieh2025Simplified}%
  \BibitemOpen
  \bibfield  {author} {\bibinfo {author} {\bibfnamefont {M.-H.}\ \bibnamefont {Hsieh}}, \bibinfo {author} {\bibfnamefont {X.}~\bibnamefont {Li}},\ and\ \bibinfo {author} {\bibfnamefont {T.-C.}\ \bibnamefont {Lin}},\ }\bibfield  {title} {\bibinfo {title} {{Simplified Quantum Weight Reduction with Optimal Bounds}},\ }\href {http://arxiv.org/abs/2510.09601} {\bibfield  {journal} {\bibinfo  {journal} {Arxiv preprint}\ } (\bibinfo {year} {2025})},\ \Eprint {https://arxiv.org/abs/2510.09601} {arXiv:2510.09601} \BibitemShut {NoStop}%
\bibitem [{\citenamefont {Haegeman}\ \emph {et~al.}(2015)\citenamefont {Haegeman}, \citenamefont {{Van Acoleyen}}, \citenamefont {Schuch}, \citenamefont {{Ignacio Cirac}},\ and\ \citenamefont {Verstraete}}]{Gaugingpaper}%
  \BibitemOpen
  \bibfield  {author} {\bibinfo {author} {\bibfnamefont {J.}~\bibnamefont {Haegeman}}, \bibinfo {author} {\bibfnamefont {K.}~\bibnamefont {{Van Acoleyen}}}, \bibinfo {author} {\bibfnamefont {N.}~\bibnamefont {Schuch}}, \bibinfo {author} {\bibfnamefont {J.}~\bibnamefont {{Ignacio Cirac}}},\ and\ \bibinfo {author} {\bibfnamefont {F.}~\bibnamefont {Verstraete}},\ }\bibfield  {title} {\bibinfo {title} {{Gauging quantum states: From global to local symmetries in many-body systems}},\ }\href {https://doi.org/10.1103/PhysRevX.5.011024} {\bibfield  {journal} {\bibinfo  {journal} {Physical Review X}\ }\textbf {\bibinfo {volume} {5}},\ \bibinfo {pages} {11024} (\bibinfo {year} {2015})},\ \Eprint {https://arxiv.org/abs/1407.1025} {arXiv:1407.1025} \BibitemShut {NoStop}%
\bibitem [{\citenamefont {Bulmash}\ and\ \citenamefont {Barkeshli}(2019)}]{Bulmash2019Gauging}%
  \BibitemOpen
  \bibfield  {author} {\bibinfo {author} {\bibfnamefont {D.}~\bibnamefont {Bulmash}}\ and\ \bibinfo {author} {\bibfnamefont {M.}~\bibnamefont {Barkeshli}},\ }\bibfield  {title} {\bibinfo {title} {{Gauging fractons: Immobile non-Abelian quasiparticles, fractals, and position-dependent degeneracies}},\ }\bibfield  {journal} {\bibinfo  {journal} {Physical Review B}\ }\textbf {\bibinfo {volume} {100}},\ \href {https://doi.org/10.1103/PhysRevB.100.155146} {10.1103/PhysRevB.100.155146} (\bibinfo {year} {2019}),\ \Eprint {https://arxiv.org/abs/1905.05771} {arXiv:1905.05771} \BibitemShut {NoStop}%
\bibitem [{\citenamefont {Prem}\ and\ \citenamefont {Williamson}(2019)}]{Prem2019Gauging}%
  \BibitemOpen
  \bibfield  {author} {\bibinfo {author} {\bibfnamefont {A.}~\bibnamefont {Prem}}\ and\ \bibinfo {author} {\bibfnamefont {D.}~\bibnamefont {Williamson}},\ }\bibfield  {title} {\bibinfo {title} {{Gauging permutation symmetries as a route to non-Abelian fractons}},\ }\href {https://doi.org/10.21468/scipostphys.7.5.068} {\bibfield  {journal} {\bibinfo  {journal} {SciPost Physics}\ }\textbf {\bibinfo {volume} {7}},\ \bibinfo {pages} {068} (\bibinfo {year} {2019})},\ \Eprint {https://arxiv.org/abs/1905.06309} {arXiv:1905.06309} \BibitemShut {NoStop}%
\bibitem [{\citenamefont {Yuan}\ \emph {et~al.}(2026)\citenamefont {Yuan}, \citenamefont {Lin}, \citenamefont {He}, \citenamefont {Cowtan},\ and\ \citenamefont {Williamson}}]{ParsimoniousSurgery}%
  \BibitemOpen
  \bibfield  {author} {\bibinfo {author} {\bibfnamefont {A.~C.}\ \bibnamefont {Yuan}}, \bibinfo {author} {\bibfnamefont {T.-C.}\ \bibnamefont {Lin}}, \bibinfo {author} {\bibfnamefont {Z.}~\bibnamefont {He}}, \bibinfo {author} {\bibfnamefont {A.}~\bibnamefont {Cowtan}},\ and\ \bibinfo {author} {\bibfnamefont {D.~J.}\ \bibnamefont {Williamson}},\ }\bibfield  {title} {\bibinfo {title} {Parsimonious quantum code surgery},\ }\href@noop {} {\bibfield  {journal} {\bibinfo  {journal} {\textit{in preparation}}\ } (\bibinfo {year} {2026})}\BibitemShut {NoStop}%
\bibitem [{\citenamefont {Shirley}\ \emph {et~al.}(2019)\citenamefont {Shirley}, \citenamefont {Slagle},\ and\ \citenamefont {Chen}}]{shirley2018FoliatedFracton}%
  \BibitemOpen
  \bibfield  {author} {\bibinfo {author} {\bibfnamefont {W.}~\bibnamefont {Shirley}}, \bibinfo {author} {\bibfnamefont {K.}~\bibnamefont {Slagle}},\ and\ \bibinfo {author} {\bibfnamefont {X.}~\bibnamefont {Chen}},\ }\bibfield  {title} {\bibinfo {title} {{Foliated fracton order from gauging subsystem symmetries}},\ }\bibfield  {journal} {\bibinfo  {journal} {SciPost Physics}\ }\textbf {\bibinfo {volume} {6}},\ \href {https://doi.org/10.21468/scipostphys.6.4.041} {10.21468/scipostphys.6.4.041} (\bibinfo {year} {2019}),\ \Eprint {https://arxiv.org/abs/1806.08679} {arXiv:1806.08679} \BibitemShut {NoStop}%
\bibitem [{\citenamefont {Campbell}(2019)}]{campbell2019theory}%
  \BibitemOpen
  \bibfield  {author} {\bibinfo {author} {\bibfnamefont {E.~T.}\ \bibnamefont {Campbell}},\ }\bibfield  {title} {\bibinfo {title} {{A theory of single-shot error correction for adversarial noise}},\ }\href {https://doi.org/10.1088/2058-9565/aafc8f} {\bibfield  {journal} {\bibinfo  {journal} {Quantum Science and Technology}\ }\textbf {\bibinfo {volume} {4}},\ \bibinfo {pages} {025006} (\bibinfo {year} {2019})}\BibitemShut {NoStop}%
\bibitem [{\citenamefont {Yoshida}(2015)}]{PhysRevB.91.245131}%
  \BibitemOpen
  \bibfield  {author} {\bibinfo {author} {\bibfnamefont {B.}~\bibnamefont {Yoshida}},\ }\bibfield  {title} {\bibinfo {title} {{Topological color code and symmetry-protected topological phases}},\ }\href {https://doi.org/10.1103/PhysRevB.91.245131} {\bibfield  {journal} {\bibinfo  {journal} {Physical Review B - Condensed Matter and Materials Physics}\ }\textbf {\bibinfo {volume} {91}},\ \bibinfo {pages} {245131} (\bibinfo {year} {2015})},\ \Eprint {https://arxiv.org/abs/1503.07208} {arXiv:1503.07208} \BibitemShut {NoStop}%
\bibitem [{\citenamefont {Else}\ and\ \citenamefont {Nayak}(2017)}]{else2017cheshire}%
  \BibitemOpen
  \bibfield  {author} {\bibinfo {author} {\bibfnamefont {D.~V.}\ \bibnamefont {Else}}\ and\ \bibinfo {author} {\bibfnamefont {C.}~\bibnamefont {Nayak}},\ }\bibfield  {title} {\bibinfo {title} {{Cheshire charge in (3+1)-dimensional topological phases}},\ }\bibfield  {journal} {\bibinfo  {journal} {Physical Review B}\ }\textbf {\bibinfo {volume} {96}},\ \href {https://doi.org/10.1103/PhysRevB.96.045136} {10.1103/PhysRevB.96.045136} (\bibinfo {year} {2017}),\ \Eprint {https://arxiv.org/abs/1702.02148} {arXiv:1702.02148} \BibitemShut {NoStop}%
\bibitem [{\citenamefont {Zhu}\ \emph {et~al.}(2023{\natexlab{b}})\citenamefont {Zhu}, \citenamefont {Sikander}, \citenamefont {Portnoy}, \citenamefont {Cross},\ and\ \citenamefont {Brown}}]{Zhu2023}%
  \BibitemOpen
  \bibfield  {author} {\bibinfo {author} {\bibfnamefont {G.}~\bibnamefont {Zhu}}, \bibinfo {author} {\bibfnamefont {S.}~\bibnamefont {Sikander}}, \bibinfo {author} {\bibfnamefont {E.}~\bibnamefont {Portnoy}}, \bibinfo {author} {\bibfnamefont {A.~W.}\ \bibnamefont {Cross}},\ and\ \bibinfo {author} {\bibfnamefont {B.~J.}\ \bibnamefont {Brown}},\ }\bibfield  {title} {\bibinfo {title} {{Non-Clifford and parallelizable fault-tolerant logical gates on constant and almost-constant rate homological quantum LDPC codes via higher symmetries}},\ }\href {http://arxiv.org/abs/2310.16982} {\bibfield  {journal} {\bibinfo  {journal} {Arxiv preprint}\ } (\bibinfo {year} {2023}{\natexlab{b}})},\ \Eprint {https://arxiv.org/abs/2310.16982} {arXiv:2310.16982} \BibitemShut {NoStop}%
\bibitem [{\citenamefont {Yoshida}(2016)}]{yoshida2015topological}%
  \BibitemOpen
  \bibfield  {author} {\bibinfo {author} {\bibfnamefont {B.}~\bibnamefont {Yoshida}},\ }\bibfield  {title} {\bibinfo {title} {{Topological phases with generalized global symmetries}},\ }\href {https://doi.org/10.1103/PhysRevB.93.155131} {\bibfield  {journal} {\bibinfo  {journal} {Physical Review B}\ }\textbf {\bibinfo {volume} {93}},\ \bibinfo {pages} {155131} (\bibinfo {year} {2016})},\ \Eprint {https://arxiv.org/abs/1508.03468} {arXiv:1508.03468} \BibitemShut {NoStop}%
\bibitem [{\citenamefont {Manjunath}\ \emph {et~al.}(2026)\citenamefont {Manjunath}, \citenamefont {Mattei}, \citenamefont {Tiwari},\ and\ \citenamefont {Ellison}}]{Ellison2025a}%
  \BibitemOpen
  \bibfield  {author} {\bibinfo {author} {\bibfnamefont {N.}~\bibnamefont {Manjunath}}, \bibinfo {author} {\bibfnamefont {V.}~\bibnamefont {Mattei}}, \bibinfo {author} {\bibfnamefont {A.}~\bibnamefont {Tiwari}},\ and\ \bibinfo {author} {\bibfnamefont {T.~D.}\ \bibnamefont {Ellison}},\ }\bibfield  {title} {\bibinfo {title} {Universal quantum computation with group surface codes},\ }\href@noop {} {\bibfield  {journal} {\bibinfo  {journal} {\textit{in preparation}}\ } (\bibinfo {year} {2026})}\BibitemShut {NoStop}%
\bibitem [{\citenamefont {Huang}\ \emph {et~al.}(2025)\citenamefont {Huang}, \citenamefont {Warman}, \citenamefont {Schafer-Nameki},\ and\ \citenamefont {Chen}}]{Huang2025a}%
  \BibitemOpen
  \bibfield  {author} {\bibinfo {author} {\bibfnamefont {S.-J.}\ \bibnamefont {Huang}}, \bibinfo {author} {\bibfnamefont {A.}~\bibnamefont {Warman}}, \bibinfo {author} {\bibfnamefont {S.}~\bibnamefont {Schafer-Nameki}},\ and\ \bibinfo {author} {\bibfnamefont {Y.}~\bibnamefont {Chen}},\ }\bibfield  {title} {\bibinfo {title} {{Hybrid Lattice Surgery: Non-Clifford Gates via Non-Abelian Surface Codes}},\ }\href {http://arxiv.org/abs/2510.20890} {\bibfield  {journal} {\bibinfo  {journal} {Arxiv preprint}\ } (\bibinfo {year} {2025})},\ \Eprint {https://arxiv.org/abs/2510.20890} {arXiv:2510.20890} \BibitemShut {NoStop}%
\bibitem [{\citenamefont {Warman}\ and\ \citenamefont {Schafer-Nameki}(2025)}]{Warman2025}%
  \BibitemOpen
  \bibfield  {author} {\bibinfo {author} {\bibfnamefont {A.}~\bibnamefont {Warman}}\ and\ \bibinfo {author} {\bibfnamefont {S.}~\bibnamefont {Schafer-Nameki}},\ }\bibfield  {title} {\bibinfo {title} {{Transversal Clifford-Hierarchy Gates via Non-Abelian Surface Codes}},\ }\href {http://arxiv.org/abs/2512.13777} {\bibfield  {journal} {\bibinfo  {journal} {Arxiv preprint}\ } (\bibinfo {year} {2025})},\ \Eprint {https://arxiv.org/abs/2512.13777} {arXiv:2512.13777} \BibitemShut {NoStop}%
\bibitem [{\citenamefont {Ren}\ \emph {et~al.}(2026)\citenamefont {Ren}, \citenamefont {Chang}, \citenamefont {Lyons}, \citenamefont {Putterman}, \citenamefont {Tantivasadakarn}, \citenamefont {Albert}, \citenamefont {Brown},\ and\ \citenamefont {Williamson}}]{GaugingLDPC}%
  \BibitemOpen
  \bibfield  {author} {\bibinfo {author} {\bibfnamefont {Y.}~\bibnamefont {Ren}}, \bibinfo {author} {\bibfnamefont {K.}~\bibnamefont {Chang}}, \bibinfo {author} {\bibfnamefont {A.}~\bibnamefont {Lyons}}, \bibinfo {author} {\bibfnamefont {H.}~\bibnamefont {Putterman}}, \bibinfo {author} {\bibfnamefont {N.}~\bibnamefont {Tantivasadakarn}}, \bibinfo {author} {\bibfnamefont {V.~V.}\ \bibnamefont {Albert}}, \bibinfo {author} {\bibfnamefont {B.~J.}\ \bibnamefont {Brown}},\ and\ \bibinfo {author} {\bibfnamefont {D.~J.}\ \bibnamefont {Williamson}},\ }\href@noop {} {\bibfield  {journal} {\bibinfo  {journal} {\textit{in preparation}}\ } (\bibinfo {year} {2026})}\BibitemShut {NoStop}%
\bibitem [{\citenamefont {Scruby}\ \emph {et~al.}(2022)\citenamefont {Scruby}, \citenamefont {Browne}, \citenamefont {Webster},\ and\ \citenamefont {Vasmer}}]{Scruby2022}%
  \BibitemOpen
  \bibfield  {author} {\bibinfo {author} {\bibfnamefont {T.~R.}\ \bibnamefont {Scruby}}, \bibinfo {author} {\bibfnamefont {D.~E.}\ \bibnamefont {Browne}}, \bibinfo {author} {\bibfnamefont {P.}~\bibnamefont {Webster}},\ and\ \bibinfo {author} {\bibfnamefont {M.}~\bibnamefont {Vasmer}},\ }\bibfield  {title} {\bibinfo {title} {{Numerical Implementation of Just-In-Time Decoding in Novel Lattice Slices through the Three-Dimensional Surface Code}},\ }\bibfield  {journal} {\bibinfo  {journal} {Quantum}\ }\textbf {\bibinfo {volume} {6}},\ \href {https://doi.org/10.22331/Q-2022-05-24-721} {10.22331/Q-2022-05-24-721} (\bibinfo {year} {2022}),\ \Eprint {https://arxiv.org/abs/2012.08536} {arXiv:2012.08536} \BibitemShut {NoStop}%
\bibitem [{\citenamefont {Scruby}\ \emph {et~al.}(2025)\citenamefont {Scruby}, \citenamefont {Nemoto},\ and\ \citenamefont {Cai}}]{Scruby2025}%
  \BibitemOpen
  \bibfield  {author} {\bibinfo {author} {\bibfnamefont {T.~R.}\ \bibnamefont {Scruby}}, \bibinfo {author} {\bibfnamefont {K.}~\bibnamefont {Nemoto}},\ and\ \bibinfo {author} {\bibfnamefont {Z.}~\bibnamefont {Cai}},\ }\bibfield  {title} {\bibinfo {title} {{Fault-tolerant quantum computation without distillation on a 2D device}},\ }\bibfield  {journal} {\bibinfo  {journal} {npj Quantum Information}\ }\textbf {\bibinfo {volume} {11}},\ \href {https://doi.org/10.1038/s41534-025-01133-7} {10.1038/s41534-025-01133-7} (\bibinfo {year} {2025}),\ \Eprint {https://arxiv.org/abs/2412.12529} {arXiv:2412.12529} \BibitemShut {NoStop}%
\bibitem [{\citenamefont {Zhu}\ \emph {et~al.}(2026)\citenamefont {Zhu}, \citenamefont {Kobayashi},\ and\ \citenamefont {Hsin}}]{Zhu2026NAQLDPC}%
  \BibitemOpen
  \bibfield  {author} {\bibinfo {author} {\bibfnamefont {G.}~\bibnamefont {Zhu}}, \bibinfo {author} {\bibfnamefont {R.}~\bibnamefont {Kobayashi}},\ and\ \bibinfo {author} {\bibfnamefont {P.-S.}\ \bibnamefont {Hsin}},\ }\bibfield  {title} {\bibinfo {title} {{Non-Abelian qLDPC: TQFT Formalism, Addressable Gauging Measurement and Application to Magic State Fountain on 2D Product Codes}},\ }\href {http://arxiv.org/abs/2601.06736} {\bibfield  {journal} {\bibinfo  {journal} {arXiv preprint}\ } (\bibinfo {year} {2026})},\ \Eprint {https://arxiv.org/abs/2601.06736} {arXiv:2601.06736} \BibitemShut {NoStop}%
\end{thebibliography}%

\end{document}